\documentclass[reqno]{article}

\usepackage{amsthm}
\usepackage{amsmath}
\usepackage{amssymb}
\usepackage{graphicx}
\usepackage{color}
\usepackage{bbm}
\usepackage[top=2.5cm, bottom=2.5cm, left=2.5cm, right=2.5cm]{geometry}
\usepackage{enumerate}
\usepackage{faktor}
\usepackage{nicefrac}
\usepackage[nottoc]{tocbibind}
\usepackage{slashed}
\usepackage{authblk}
\usepackage[normalem]{ulem}
\usepackage{bbm}
\usepackage{mathabx}
\usepackage{comment}
\usepackage{float}

\usepackage[scr=euler]{mathalfa}

\def\mathclap#1{\text{\hbox to 0pt{\hss$\mathsurround=0pt#1$\hss}}}

\newcommand{\N}{\mathbb{N}}
\newcommand{\R}{\mathbb{R}}

\newcommand{\CH}{\mathcal{CH}^+}
\newcommand{\Hp}{\mathcal{H}^+}

\newcommand{\Mun}{\underline{\mathcal{M}}}

\newcommand{\Sp}{\mathbb{S}}

\newcommand{\vols}{\mathrm{vol}_{\mathbb{S}^2}}

\newcommand{\swl}{\mathring{\slashed{\Delta}}_{[s]}}

\newcommand{\un}{\underline}
\newcommand{\Pb}{\mathbb{P}_{S(l>2)}}
\newcommand{\loc}{\mathrm{loc}}

\newcommand{\f}{\nicefrac}

\newcommand{\rd}{\partial}

\begin{document}

\numberwithin{equation}{section}
\newtheorem{theorem}[equation]{Theorem}
\newtheorem{remark}[equation]{Remark}
\newtheorem{assumption}[equation]{Assumption}
\newtheorem{claim}[equation]{Claim}
\newtheorem{lemma}[equation]{Lemma}
\newtheorem{definition}[equation]{Definition}
\newtheorem{corollary}[equation]{Corollary}
\newtheorem{proposition}[equation]{Proposition}

\title{A note on the instability of the Kerr Cauchy horizon under linearised gravitational perturbations}

\author{Jan Sbierski\thanks{School of Mathematics, 
University of Edinburgh,
James Clerk Maxwell Building,
Peter Guthrie Tait Road, 
Edinburgh, 
EH9 3FD,
United Kingdom}}
\date{\today}

\maketitle

\begin{abstract}
This note slightly strengthens the result of \cite{Sbie23} on the linear instability of the Kerr Cauchy horizon.  This strengthened result is used in the proof \cite{LukSbie26} of the non-linear instability of the Kerr Cauchy horizon.
\end{abstract}

\section{Introduction}

This note shows how to slightly strengthen the result \cite{Sbie23}, which proves a blow-up result for the $s=+2$ Teukolsky equation at the Cauchy horizon in the interior of subextremal rotating Kerr black holes. Using slightly stronger assumptions on the asymptotic behaviour of the Teukolsky field along the event horizon (which are still satisfied by solutions arising from generic initial data posed on a global spacelike Cauchy hypersurface), we infer here slightly stronger and more detailed blow-up estimates near the Cauchy horizon. 

In \cite{Sbie24} a $C^{0,1}_{\loc}$-inextendibility result was proven for weak null singularities which are expected to form at the Cauchy horizon in the interior of generic vacuum black holes. In \cite{Sbie24}, \cite{Sbie24b} it was reported a while ago that the instability result of \cite{Sbie23} can be strengthened so that, \emph{if the same blow-up held at the non-linear level}, the assumptions of the $C^{0,1}_{\loc}$-inextendibility result \cite{Sbie24} would be met. This note provides the reported slight improvement. 

The  strategy just outlined for showing the formation of \emph{Lipschitz-inextendible} weak null singularities in the interior of generically perturbed rotating black holes is concluded in collaboration with Luk in \cite{LukSbie26} (relying on the works \cite{DafLuk17}, \cite{DafLuk26} of Dafermos and Luk), which uses the linear result of this note. We also refer the reader to the pointwise analysis \cite{Gur24} for the linear $s=+2$ Teukolsky field by Gurriaran, which, at the linear level is also strong enough to verify the curvature blow-up condition from \cite{Sbie24}. Furthermore, while finalising \cite{LukSbie26}, Gurriaran has posted the related paper \cite{Gur26} on the arXiv proving a very similar result.

The slight improvement over \cite{Sbie23} provided in this note is stated in Theorem \ref{MainThm}. The improvements are as follows: i) by adding the assumptions that  the angular $l=2$ mode of the $s=+2$ Teukolsky field  has the slowest decay on the event horizon and that higher angular modes as well as a time-derivative decay a little faster on the event horizon, we show that the corresponding conclusions on decay  hold near the Cauchy horizon; ii) the instability result is formulated on a general hypersurface which approaches the Cauchy horizon; iii) we allow for odd powers in the polynomial weights of the energy norms capturing the decay (and blow-up) of the Teukolsky field.

The proof of Theorem \ref{MainThm} is  a small modification of the proof in \cite{Sbie23}.   We use the convention that \underline{underlined} references to equations, theorems, etc.\ refer to those in \cite{Sbie23}. 
Moreover, we follow  the notation of \cite{Sbie23} throughout this note, all of which can be found in \underline{Section 2}.

\section{Main result}

We consider the interior of a subextremal Kerr black hole with parameters $0 < |a| < M$ in $(v_+, r, \theta, \varphi_+)$ coordinates, which are related to the Boyer-Lindquist coordinates $(t,r, \theta, \varphi)$ by $v_+ = t + r^*$ and $\varphi_+ = \varphi + \overline{r} \:\:  \mathrm{mod}\:  2\pi$, where $\frac{dr^*}{dr} = \frac{r^2 + a^2}{\Delta}$ and $\frac{d\overline{r}}{dr} = \frac{a}{\Delta}$ with $\Delta = r^2 - 2Mr + a^2$. The manifold with left event horizon $\Hp_l$ and right event horizon $\Hp_r$ attached is denoted by $(\underline{\mathcal{M}},g)$. We also introduce the functions $f^+ = v_+ -r + r_+$ and $f^- = -v_+ + 2r^* - r + r_-$, where $r_- < r_+$ are the roots of $\Delta$. In $(v_+,r, \theta, \varphi_+)$ coordinates, the Teukolsky equation takes the form
\begin{equation*}
\begin{split}
\mathcal{T}_{[s]} \psi_s := & a^2 \sin^2 \theta \,\partial_{v_+}^2\psi_s + 2a \,\partial_{v_+}\partial_{\varphi_+} \psi_s + 2(r^2 + a^2)\, \partial_{v_+}\partial_r \psi_s 
+2 a\, \partial_{\varphi_+}\partial_r \psi_s + \Delta \,\partial_r^2 \psi_s \\
&+ 2\Big( r(1-2s) - isa\cos \theta\Big)\, \partial_{v_+} \psi_s +2(r-M)(1-s) \,\partial_r \psi_s + \mathring{\slashed{\Delta}}_{[s]} \psi_s - 2s \psi_s 
= 0 \;,
\end{split}
\end{equation*}
where the spin $s$-weighted spherical Laplacian $\mathring{\slashed\Delta}_{[s]} = (\tilde{Z}_{1,+})^2 + (\tilde{Z}_{2,+})^2 + (\tilde{Z}_{3,+})^2 + s + s^2$ is given in terms of the vector fields
\begin{equation*}
\begin{aligned}
\tilde{Z}_{1,+} &= - \sin \varphi_+ \, \partial_\theta + \cos \varphi_+(-is \frac{1}{\sin \theta} - \frac{\cos \theta}{\sin \theta} \partial_{\varphi_+}) \\
\tilde{Z}_{2,+} &= -\cos \varphi_+ \, \partial_\theta - \sin \varphi_+(-is \frac{1}{\sin \theta} - \frac{\cos \theta}{\sin \theta} \partial_{\varphi_+}) \\
\tilde{Z}_{3,+} &= \partial_{\varphi_+} \;.
\end{aligned}
\end{equation*}
The spaces $\mathscr{I}^k_{[2]}(\Mun)$ of $C^k$-regular spin $2$-weighted functions are defined in Definition 2.74 of \cite{LukSbie26}, where $k \in \N_0$.
For $\psi \in \mathscr{I}^k_{[2]}(\Mun)$, $k \in \N_0$, we define the projection onto the $ml$ spin $2$-weighted spherical harmonic $Y_{ml}^{[2]}(\theta, \varphi_+; 0) = S_{ml}^{[2]}(\cos \theta;0) e^{im \varphi_+}$ (see \underline{Proposition 5.2})  by 
\begin{equation} \label{DefP}
(\mathbb{P}_{S(ml)} \psi )(v_+, r, \theta, \varphi_+) := \int_{\Sp^2} \psi(v_+,r,\theta', \varphi_+') \overline{Y_{ml}^{[2]}(\theta',\varphi_+';0)} \, \vols \cdot Y_{ml}^{[2]}(\theta, \varphi_+;0) \;.
\end{equation}
The subscript `S' always denotes the projection onto the spin $2$-weighted \emph{spherical} harmonics -- in contrast to the projection onto the spin $2$-weighted spheroidal harmonics. We also define $\psi_{S(l = 2)} := \sum_{m = -2}^2 \mathbb{P}_{S(m2)} \psi$ and $\psi_{S(l > 2)} := \psi - \psi_{S(l=2)}$. We can now state the main theorem proven in this note.

\begin{theorem} \label{MainThm}
Consider a patch of $\Mun$ given by $\Mun \cap \{f_- \leq v_1\} \cap \{f_+ \geq v_0\}$ for some $v_0 >1$, $ v_1 \in \R$, see also \underline{Figure 9}. Let $\psi \in \mathscr{I}^{10}_{[2]}(\Mun\cap \{f_- \leq v_1\} \cap \{f_+ \geq v_0\} )$ satisfy the Teukolsky equation $\mathcal{T}_{[2]}\psi = 0$. Let $q \in \N_{\geq 4}$ and assume that $\psi$ satisfies for any $ q_- <q$ close to $q$
\begin{align}
&\int\limits_{\Hp_r \cap \{ v_+ \geq v_0\}} v_+^{q} |\psi_{S(l=2)}|^2 \, \vols dv_+ = \infty \label{EqThmA} \\
\sum_{0\leq i_1+i_2+i_3+j \leq 1} &\int\limits_{\Hp_r \cap \{v_+ \geq v_0\}} v_+^{q_-} |\widetilde Z_{1,+}^{i_1}\widetilde Z_{2,+}^{i_2}\widetilde Z_{3,+}^{i_3}\rd_{v_+}^{j} f|^2 \,\vols\,dv_{+} < +\infty \label{EqThmB} \\
\sum_{0\leq i_1+i_2+i_3+j \leq 1} &\int\limits_{\Hp_r \cap \{v_+ \geq v_0\}} v_+^{q } |\widetilde Z_{1,+}^{i_1}\widetilde Z_{2,+}^{i_2}\widetilde Z_{3,+}^{i_3}\rd_{v_+}^{j+1} f|^2 \,\vols\,dv_{+} < +\infty \label{EqThmC} \\
\sum_{0\leq i_1+i_2+i_3+j \leq 1} &\int\limits_{\Hp_r \cap \{v_+ \geq v_0\}} v_+^{q} |\widetilde Z_{1,+}^{i_1}\widetilde Z_{2,+}^{i_2}\widetilde Z_{3,+}^{i_3}\rd_{v_+}^{j} f_{S(l>2)}|^2 \,\vols\,dv_{+} < +\infty \label{EqThmD} 
\end{align}
with $f \in \{\rd_{v_+}^a\rd_{\varphi_+}^b\rd_r^c\psi, \rd_{v_+} \rd_{v_+}^a\rd_{\varphi_+}^b\rd_r^c\psi, \mathcal{Q}_{[2]} \rd_{v_+}^a\rd_{\varphi_+}^b\rd_r^c \psi, \}, \;0 \leq a + b \leq 2, \;c = 0,1,2 $.\footnote{The number of derivatives assumed here is not sharp and can be improved. Furthermore, $\mathcal{Q}_{[s]} := a^2 \sin^2 \theta \, \rd_{v_+}^2 - 2isa\cos \theta \, \rd_{v_+} + \swl$ denotes the spin $s$-weighted Carter operator and $\vols = \sin \theta d \theta d \varphi_+$.} 

Let $\Sigma$ be a hypersurface given as a graph $\Sigma = \{ \big(v_+, r_\Sigma(v_+, \theta, \varphi_+), \theta, \varphi_+ \big) \in \Mun \; | \; (v_+, \theta, \varphi_+) \in (v_0, \infty) \times \Sp^2\}$, where $r_\Sigma : (v_0, \infty) \times \Sp^2 \to (r_-, r_+)$ is a smooth function with $\sup_{(\theta, \varphi_+) \in \Sp^2} |r_\Sigma(v_+, \theta, \varphi_+) - r_-| \lesssim v_+^{- \sigma}$ for some $\sigma >0$.
Then the following bounds hold:

\begin{minipage}{0.35\textwidth}
\begin{flalign}
&\int\limits_\Sigma v_+^{q} | \psi_{S(l = 2)}|^2 \, \vols dv_+ = \infty &\label{EqThmConA} \\
&\int\limits_\Sigma v_+^{q_-} | \psi|^2 \, \vols dv_+ < \infty &\label{EqThmConB} 
\end{flalign}
\end{minipage}
\begin{minipage}{0.12\textwidth}
${}$
\end{minipage}
\begin{minipage}{0.49\textwidth}
\begin{align}
&\int\limits_\Sigma v_+^{q} |\rd_{v_+} \psi|^2 \, \vols dv_+ < \infty \label{EqThmConC} \\
&\int\limits_\Sigma v_+^{q} | \psi_{S(l > 2)}|^2 \, \vols dv_+ < \infty \label{EqThmConD} \;.
\end{align}
\end{minipage}
\end{theorem}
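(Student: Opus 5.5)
We follow the architecture of the proof of \underline{Theorem 1.1} in \cite{Sbie23}, tracking polynomial weights $v_+^{q}$ and $v_+^{q_-}$ (now allowing odd $q$) and replacing the final hypersurface $\{r = \text{const}\}$ near $\CH$ by the general graph $\Sigma$. The argument has three parts.

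\emph{Upper bounds \eqref{EqThmConB}--\eqref{EqThmConD}.} We first propagate the weighted energy assumptions \eqref{EqThmB}--\eqref{EqThmD} from $\Hp_r$ into the interior. In the redshift region near $\Hp$ and in the region $\{r_{\mathrm{blue}}\le r\le r_{\mathrm{red}}\}$ we use the energy estimates of \underline{Section 4}, with the weight $v_+^{p}$ inserted. These estimates only use $|\partial_{v_+}(v_+^p)|\lesssim v_+^{p-1}$, so $p$ need not be even. We then enter the blueshift region near $\CH$, where we apply the $r$-weighted estimates of \cite{Sbie23} to the quantities $\partial_{v_+}\psi$, $\psi_{S(l>2)}$ and $\psi$ with weights $q$, $q$ and $q_-$ respectively.

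Two points need care.
\begin{itemize}
\item The projection $\mathbb{P}_{S(l>2)}$ does not commute with $\mathcal{T}_{[2]}$, because of the terms $a^2\sin^2\theta\,\partial_{v_+}^2$ and $-2isa\cos\theta\,\partial_{v_+}$. Applying $\Pb$ to the equation therefore produces error terms involving $\partial_{v_+}\psi$ and $\partial_{v_+}^2\psi$, which couple $l=2$ to $l=3$. These errors are controlled by assumption \eqref{EqThmC} with the better weight $q$. This is why the estimate for $\partial_{v_+}\psi$ has to be closed first.
\item The commutator terms hidden in $\mathcal{Q}_{[2]}$ and in the $\tilde Z$-derivatives are handled by the derivatives assumed in the hypotheses.
\end{itemize}

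To pass from energy fluxes through level sets of $r$ (or of $f_\pm$) to the integrals over $\Sigma$, we use the fundamental theorem of calculus in $r$ along the integral curves of $\partial_r$ at fixed $(v_+,\theta,\varphi_+)$. The resulting bulk term is controlled by spacetime integrals that have already been estimated. Since $|r_\Sigma-r_-|\lesssim v_+^{-\sigma}$, $\Sigma$ lies, for large $v_+$, in the region where the estimates hold, and the integral over $v_+\le$ const is finite by regularity.

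\emph{Lower bound \eqref{EqThmConA}.} We argue by contradiction, as in \cite{Sbie23}. Suppose the integral in \eqref{EqThmConA} is finite. Combined with the upper bounds just proven, this gives finiteness of the $v_+^{q}$-weighted $L^2$ norm of all of $\psi$ on $\Sigma$, together with its $\partial_{v_+}$ derivative. We then run the backwards argument of \cite{Sbie23}: the $L^2$ conservation law in the interior (the $T$-energy identity for $s=+2$ on frequency-localised pieces via the Teukolsky--Starobinsky/scattering map, \underline{Section 5--6}) transports finiteness from near $\CH$ back to $\Hp_r$. This yields $\int_{\Hp_r} v_+^{q}|\psi_{S(l=2)}|^2<\infty$, contradicting \eqref{EqThmA}.

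\emph{Main obstacle.} The hard step is making this backward transfer work with $\Sigma$ a general graph rather than a fixed $\{r=r_0\}$ near $r_-$. The weight $v_+^q$ corresponds under Fourier transform in $v_+$ to $H^{q/2}$-regularity in frequency near $\omega=0$, and \cite{Sbie23} used even $q$ to write this as $\partial_\omega^{q/2}$. For odd $q$ we would instead use fractional Sobolev norms. Alternatively, since $q_-<q$ is arbitrary and $\partial_{v_+}\psi$ already has weight $q$, we can work with $\partial_{v_+}$-differentiated quantities to shift parity.

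For the graph, we first compare the integral over $\Sigma$ with the integral over $\{r=r_-+\epsilon\}$ for small $\epsilon$, again by integrating $\partial_r\psi$. The difference is bounded by $v_+^{-\sigma}$ times quantities that are finite with weight $q$ by the upper bounds, using \eqref{EqThmC} for $\partial_{v_+}\psi$. This reduces \eqref{EqThmConA} to the statement on $\{r=r_-+\epsilon\}$, to which the argument of \cite{Sbie23} applies.
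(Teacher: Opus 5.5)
Your upper-bound argument for \eqref{EqThmConB}--\eqref{EqThmConD} is essentially the paper's. You use the Section 4 estimates with $q_r=q_-$ for $\psi$ and $q_r=q$ for $\rd_{v_+}\psi$. You then treat $\Pb\psi$ as solving an inhomogeneous equation whose errors $[\sin^2\theta,\Pb]\rd_{v_+}^2\psi$ and $[\cos\theta,\Pb]\rd_{v_+}\psi$ are controlled by the already closed weight-$q$ estimates for $\rd_{v_+}\psi$.

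\textbf{The gap is the lower bound \eqref{EqThmConA}.} No energy or conservation identity carries a polynomial weight $v_+^q$ from the Cauchy horizon back to $\Hp_r$. Moreover, data on $\CH_l$ alone do not determine the solution backwards. The link actually used is the frequency-wise scattering relation $\widecheck{\psi}_{m_02}(r_-;\omega)=\mathfrak{R}_{\Hp_l,m_02}(\omega)a_{\Hp_l,m_02}(\omega)+\mathfrak{T}_{\Hp_r,m_02}(\omega)a_{\Hp_r,m_02}(\omega)$ on $\CH_l=\{r=r_-\}$, with $\mathfrak{T}_{\Hp_r,m_02}(0)\neq 0$. What it transports is (non-)regularity of $\widecheck{\psi}$ at $\omega=0$.

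This requires extending $\psi$ to all of $\Mun$ with data on $\Hp_l$ decaying at rate $q_l\geq q$ (\un{Theorem 8.6}), which you do not mention. For odd $q=2\mathfrak{p}-1$, saying ``use fractional Sobolev norms'' names the setting but none of the required steps:
\begin{enumerate}
\item[(i)] A localisation lemma: if $h\notin H^{\f{1}{2}}(\R)$ but $\omega h\in H^{\f{1}{2}}(\R)$, then $h\notin H^{\f{1}{2}}((-\varepsilon,\varepsilon))$ for every $\varepsilon>0$. It is applied to $h=\rd_\omega^{\mathfrak{p}-1}(\widecheck{\psi|_{\Hp_r}})_{S(m_02)}$, and the hypothesis $\omega h\in H^{\f{1}{2}}$ comes precisely from \eqref{EqThmC}. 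So \eqref{EqThmC} is essential for the lower bound, not only for the commutator terms.
\item[(ii)] A spherical/spheroidal comparison in $H^{\mathfrak{p}-\f{1}{2}}((-\delta,\delta))$. It exploits $E_{m_0l_0l}(0)=0$ for $l\neq l_0$ and the $H^{\f{1}{2}}$ bound on $\omega\,\rd_\omega^{\mathfrak{p}-1}\widecheck{h}_{S(ml)}$. It is used both on $\Hp_r$ and at $r=r_-$, where \eqref{EqThmConB'} and \eqref{EqThmConC'} supply the hypotheses.
\item[(iii)] Multiplication lemmas in $H^{k+\f{1}{2}}$ on intervals, to multiply by $\mathfrak{R}$ and divide by $\mathfrak{T}$ and $E_{m_0l_0l_0}$.
\end{enumerate}
Your fallback of shifting parity via $\rd_{v_+}$ cannot work. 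The operator $\rd_{v_+}$ multiplies $\widecheck{\psi}$ by $\omega$ and removes exactly the $\omega=0$ singularity; indeed \eqref{EqThmC} already says $\rd_{v_+}\psi$ is finite at weight $q$.

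\textbf{The reduction from $\Sigma$ to a fixed level set $\{r=r_-+\epsilon\}$ also fails.} Since $|r_\Sigma-(r_-+\epsilon)|\to\epsilon$, there is no $v_+^{-\sigma}$ gain. Controlling the difference would then need a bound on $\int v_+^{q}|\rd_r\psi_{S(l=2)}|^2$, which is not available; only the weight $q_-$ is.

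The paper instead proceeds as follows:
\begin{enumerate}
\item It proves the blow-up on $\CH_l=\{r=r_-\}$ itself, which is where the scattering coefficients at $\omega=0$ are controlled.
\item It passes from $\int_{\R}$ to $\int_{\{v_+\geq 1\}}$ using $q_l\geq q$.
\item It compares with $\Sigma$ via $|f(r_-)-f(r_\Sigma)|^2\leq |r_\Sigma-r_-|\int|\rd_r f|^2\,dr$. The factor $|r_\Sigma-r_-|\lesssim v_+^{-\sigma}$ lowers the required weight to $q-\sigma\leq q_-$, which is bounded by \un{(4.52)}.
\end{enumerate}
So the general graph $\Sigma$ is the easy step. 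What you would still need to supply is the odd-$q$ frequency analysis at $\omega=0$.
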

In particular, \eqref{EqThmConA} and \eqref{EqThmConD} are the statements needed in \cite{LukSbie26}. If in assumptions \eqref{EqThmC}, \eqref{EqThmD} the weight $q$ is replaced by $q + \lambda$ with $\lambda \geq 0$, then this faster decay can also be propagated to $\Sigma$ in \eqref{EqThmConC} and \eqref{EqThmConD}.

To understand the structure of the assumptions made on the event horizon, we refer the reader to \underline{Remark 3.11} on the asymptotics of the Teukolsky field on the event horizon from \cite{MaZha23}.



\section{Auxiliary results}

\subsection{Some elementary results concerning fractional Sobolev spaces}

In this section we collect a few basic results concerning fractional Sobolev spaces that are needed in this paper. We only need the $L^2$-based fractional Sobolev spaces and only in one dimension. We use here the Gagliardo approach which localises naturally. A detailed introduction to fractional Sobolev spaces can be found in \cite{FracSobolev}.

\begin{definition}
Let $\Omega \subseteq \R$ be open and connected. We define for $k \in \N_0$
\begin{enumerate}
\item $H^{k}(\Omega) := \{ f \in L^2(\Omega) \; | \; \rd^j f \in L^2(\Omega) \textnormal{ for all } 0 \leq j \leq k\}$
\item $H^{\nicefrac{1}{2}}(\Omega) := \{ f \in L^2(\Omega) \; | \; \frac{f(\omega) - f(\xi)}{|\omega-\xi|} \in L^2(\Omega \times \Omega)\}$
\item $H^{k + \nicefrac{1}{2}}(\Omega) := \{ f \in H^{k}(\Omega) \; | \; \rd^kf \in H^{\nicefrac{1}{2}}(\Omega) \}$ \;.
\end{enumerate}
Here, all derivatives are weak derivatives. For $f \in H^{\f{1}{2}}(\Omega)$ we set $$[f]_{H^{\f{1}{2}}(\Omega)}^2 := \iint\limits_{\Omega \times \Omega} \frac{|f(\omega) - f(\xi)|^2}{|\omega-\xi|^2} \, d\omega d \xi \quad \textnormal{ and } \quad ||f||_{H^{\nicefrac{1}{2}}(\Omega)}^2 := ||f||_{L^2(\Omega)}^2 + [f]_{H^{\f{1}{2}}(\Omega)}^2\;.$$
\end{definition}

\begin{lemma}\label{Lem1contained}
We have $H^{1}(\Omega) \subseteq H^{\nicefrac{1}{2}}(\Omega)$.
\end{lemma}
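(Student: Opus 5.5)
The plan is a direct estimate of the Gagliardo seminorm: for $f \in H^1(\Omega)$ we will show $[f]_{H^{\f{1}{2}}(\Omega)}^2 \leq 4\|f\|_{L^2(\Omega)}^2 + 2\|\rd f\|_{L^2(\Omega)}^2$. Since $f \in L^2(\Omega)$ already, this gives $f \in H^{\nicefrac{1}{2}}(\Omega)$. We split the domain of integration $\Omega\times\Omega$ into a far region $\{|\omega-\xi|\geq 1\}$ and a near region $\{|\omega-\xi|<1\}$. We use that $\Omega\subseteq\R$ is open and connected, hence an (possibly unbounded) open interval, and in particular convex.

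\emph{Far region.} We use the pointwise bound $|f(\omega)-f(\xi)|^2 \leq 2|f(\omega)|^2 + 2|f(\xi)|^2$. By symmetry in $\omega$ and $\xi$ together with Fubini, the contribution of this region is at most
\begin{equation*}
4\int_\Omega |f(\omega)|^2 \int_{\{|\xi-\omega|\geq 1\}} \frac{d\xi}{|\omega-\xi|^2}\, d\omega \leq 8\|f\|_{L^2(\Omega)}^2 .
\end{equation*}
A little more care with the factor (bounding the far region symmetrically) gives the constant $4$ claimed above. In any case only finiteness matters.

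\emph{Near region.} This is the step requiring the one genuine input. We use the standard fact that an $H^1$ function on an interval has an absolutely continuous representative whose classical derivative agrees a.e.\ with the weak derivative. Since $\Omega$ is convex, for $\omega,\xi\in\Omega$ the whole segment between them lies in $\Omega$. Writing $h = \omega-\xi$, we therefore have
\begin{equation*}
f(\omega)-f(\xi) = h\int_0^1 \rd f(\xi+th)\,dt .
\end{equation*}
Cauchy--Schwarz then gives $|f(\omega)-f(\xi)|^2/|h|^2 \leq \int_0^1|\rd f(\xi+th)|^2\,dt$. Next we change variables $(\omega,\xi)\mapsto(h,\xi)$ and extend $\rd f$ by zero outside $\Omega$. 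Applying Tonelli, for each fixed $h\in(-1,1)$ and $t\in[0,1]$ the inner integral satisfies $\int|\rd f(\xi+th)|^2\,d\xi \leq \|\rd f\|^2_{L^2(\Omega)}$. Integrating over $t\in[0,1]$ and $h\in(-1,1)$ bounds the near contribution by $2\|\rd f\|_{L^2(\Omega)}^2$.

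Adding the two regions yields the claimed inequality, and hence $H^1(\Omega)\subseteq H^{\nicefrac12}(\Omega)$. The only point needing care is the near region: invoking the absolutely continuous representative and the convexity of $\Omega$, which guarantees that the segment $\xi+th$ stays in $\Omega$. If one prefers to avoid the representative, one can first prove the estimate for $f\in C^\infty(\Omega)\cap H^1(\Omega)$, which is dense by Meyers--Serrin. One then passes to the limit, using Fatou's lemma on the Gagliardo double integral along an a.e.-convergent subsequence.
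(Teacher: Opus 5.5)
Your argument is correct and is essentially the proof of Proposition 2.2 in \cite{FracSobolev}, which the paper simply cites: a far region $|\omega-\xi|\geq 1$ controlled by $\|f\|_{L^2}$, and a near region handled by the fundamental theorem of calculus along segments plus Fubini; the convexity of the interval does the job that the extension operator for Lipschitz domains does there. One correction to an aside: the announced inequality $[f]^2_{H^{\f{1}{2}}(\Omega)}\leq 4\|f\|^2_{L^2}+2\|\rd f\|^2_{L^2}$ is false, since for $f(x)=e^{i\pi x/2}\phi(x/L)$ with a fixed real $\phi\in C^\infty_c(\R)\setminus\{0\}$ and $L\to\infty$ one gets $[f]^2_{H^{\f{1}{2}}(\R)}\approx\pi^2\|f\|^2_{L^2}$ while $4\|f\|^2_{L^2}+2\|\rd f\|^2_{L^2}\approx(4+\pi^2/2)\|f\|^2_{L^2}$, so state the bound with the constant $8$ you actually derived (which suffices, as you note).
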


\begin{lemma}\label{LemRest}
Let $\Omega \subseteq V \subseteq \R$ and $f \in H^{k + \nicefrac{1}{2}}(V)$. Then $f|_\Omega \in H^{k + \nicefrac{1}{2}}(\Omega)$.
\end{lemma}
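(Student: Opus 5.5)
The plan is to unwind the definitions and use only two facts: weak derivatives commute with restriction to open subsets, and the Gagliardo seminorm is an integral of a nonnegative function. Throughout, $\Omega$ and $V$ are taken to be open and connected, as in the definition of the spaces.

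First I would show that $f|_\Omega \in H^k(\Omega)$ with $\rd^j(f|_\Omega) = (\rd^j f)|_\Omega$ for $0 \leq j \leq k$.
\begin{itemize}
\item The $L^2$ statements are immediate, since $\|g|_\Omega\|_{L^2(\Omega)} \leq \|g\|_{L^2(V)}$ for any $g \in L^2(V)$.
\item For the weak derivatives, take any test function $\phi \in C^\infty_c(\Omega)$. Extending it by zero gives an element of $C^\infty_c(V)$, because $\Omega$ is open in $V$ and $\mathrm{supp}\,\phi$ is a compact subset of $\Omega$.
\item The defining identity $\int_V f\, \rd^j\phi = (-1)^j \int_V (\rd^j f)\, \phi$ then reads $\int_\Omega f|_\Omega\, \rd^j\phi = (-1)^j \int_\Omega (\rd^j f)|_\Omega\, \phi$.
\item Hence $(\rd^j f)|_\Omega$ is the $j$-th weak derivative of $f|_\Omega$ on $\Omega$, and it lies in $L^2(\Omega)$.
\end{itemize}

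Next I would bound the half-derivative part. Set $g := \rd^k f \in H^{\f{1}{2}}(V)$; by the first step, $\rd^k(f|_\Omega) = g|_\Omega$. The integrand $\frac{|g(\omega)-g(\xi)|^2}{|\omega-\xi|^2}$ is nonnegative and $\Omega \times \Omega \subseteq V \times V$, so
\begin{equation*}
[g|_\Omega]^2_{H^{\f{1}{2}}(\Omega)} = \iint\limits_{\Omega\times\Omega} \frac{|g(\omega)-g(\xi)|^2}{|\omega-\xi|^2}\, d\omega\, d\xi \leq [g]^2_{H^{\f{1}{2}}(V)} < \infty \;.
\end{equation*}
Combined with $g|_\Omega \in L^2(\Omega)$, this gives $\rd^k(f|_\Omega) \in H^{\f{1}{2}}(\Omega)$, and therefore $f|_\Omega \in H^{k+\f{1}{2}}(\Omega)$. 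In fact the argument also yields the norm bound $\|f|_\Omega\|_{H^{k+\f{1}{2}}(\Omega)} \leq \|f\|_{H^{k+\f{1}{2}}(V)}$.

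There is no real obstacle here. The only point needing a moment's care is the identification $\rd^j(f|_\Omega) = (\rd^j f)|_\Omega$. It relies on zero-extension of test functions from the open subset $\Omega$ to $V$, which is exactly why the Gagliardo definition localises so naturally.
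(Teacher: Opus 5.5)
Your proposal is correct and is essentially the argument the paper has in mind: the paper simply calls this lemma ``immediate'' without writing out a proof. Your two steps (weak derivatives commute with restriction via zero-extension of test functions, and the Gagliardo seminorm over $\Omega\times\Omega$ is dominated by the one over $V\times V$ because the integrand is nonnegative) are exactly the unwinding of definitions that this refers to.
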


\begin{lemma}\label{LemChi1}
Let $f \in H^{\nicefrac{1}{2}}(\Omega)$ and $ \chi \in C^1(\Omega)$ with $\sum_{j = 0}^{1} ||\rd^j \chi ||_{L^\infty(\Omega)} < \infty$. Then $\chi \cdot f \in H^{\f{1}{2}}(\Omega)$.
\end{lemma}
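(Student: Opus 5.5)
The plan is to bound the Gagliardo seminorm of $\chi f$ through the elementary splitting
\begin{equation*}
\chi(\omega) f(\omega) - \chi(\xi) f(\xi) = \chi(\omega)\big(f(\omega) - f(\xi)\big) + \big(\chi(\omega) - \chi(\xi)\big) f(\xi),
\end{equation*}
and to treat the two resulting terms separately. Membership in $L^2(\Omega)$ is immediate: $\|\chi f\|_{L^2(\Omega)} \leq \|\chi\|_{L^\infty(\Omega)} \|f\|_{L^2(\Omega)}$. Measurability of the difference quotient on $\Omega\times\Omega$ is clear since $\chi$ is continuous and $f$ is measurable.

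For the first term, the bound $|\chi(\omega)|\,|f(\omega)-f(\xi)|/|\omega-\xi| \leq \|\chi\|_{L^\infty(\Omega)}\,|f(\omega)-f(\xi)|/|\omega-\xi|$ shows that its contribution to the squared seminorm is at most $\|\chi\|_{L^\infty}^2 [f]^2_{H^{1/2}(\Omega)}$.

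The second term is the only real issue, and I would split the integration region according to whether $|\omega-\xi| \leq 1$ or $|\omega - \xi| > 1$.
\begin{itemize}
\item \emph{Near the diagonal.} Since $\Omega$ is open and connected it is an interval, so the segment between $\omega$ and $\xi$ lies in $\Omega$. The mean value theorem then gives $|\chi(\omega)-\chi(\xi)| \leq \|\rd\chi\|_{L^\infty}|\omega-\xi|$, and hence
\begin{equation*}
\iint_{|\omega-\xi|\leq 1} \frac{|\chi(\omega)-\chi(\xi)|^2|f(\xi)|^2}{|\omega-\xi|^2}\,d\omega\, d\xi \leq \|\rd\chi\|_{L^\infty}^2 \int_\Omega |f(\xi)|^2 \int_{|\omega-\xi|\leq 1} d\omega\, d\xi \leq 2\|\rd\chi\|_{L^\infty}^2 \|f\|_{L^2}^2 .
\end{equation*}
\item \emph{Away from the diagonal.} Here I would use the crude bound $|\chi(\omega)-\chi(\xi)|\leq 2\|\chi\|_{L^\infty}$ together with $\int_{|\omega-\xi|>1} |\omega-\xi|^{-2}\,d\omega \leq 2$. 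This gives a contribution of at most $8\|\chi\|_{L^\infty}^2\|f\|_{L^2}^2$.
\end{itemize}

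Summing the pieces yields $\|\chi f\|_{H^{1/2}(\Omega)} \lesssim \big(\|\chi\|_{L^\infty}+\|\rd\chi\|_{L^\infty}\big)\|f\|_{H^{1/2}(\Omega)}$, which proves the lemma. The only point requiring care is the near-diagonal estimate, which needs the convexity of $\Omega$. This is exactly where connectedness of $\Omega$ in one dimension enters; if $\Omega$ were not an interval, one would instead have to bound the difference of $\chi$ across gaps by $2\|\chi\|_{L^\infty}$, using the fact that $|\omega-\xi|$ is then bounded below.
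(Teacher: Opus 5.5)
Your proof is correct and follows essentially the same route as the paper, which simply cites Lemma 5.3 of \cite{FracSobolev}: that proof uses exactly your splitting of $\chi(\omega)f(\omega)-\chi(\xi)f(\xi)$, a Lipschitz bound on $\chi$ for $|\omega-\xi|\le 1$ and the $L^\infty$ bound otherwise, and your use of connectedness, via $\chi(\omega)-\chi(\xi)=\int_\xi^\omega \rd\chi$ (which also covers complex-valued $\chi$), is exactly what turns $\rd\chi\in L^\infty$ into that Lipschitz bound. The only inaccuracy is in your closing aside, which the lemma does not need: if two components of a non-interval $\Omega$ abut, as in $(-1,0)\cup(0,1)$, then $|\omega-\xi|$ is not bounded below across the gap, and the statement genuinely fails (take $f\equiv 1$, $\chi=0$ on $(-1,0)$ and $\chi=1$ on $(0,1)$).
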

For the proof of Lemma \ref{Lem1contained} see Proposition 2.2 in \cite{FracSobolev}, Lemma \ref{LemRest} is immediate, and Lemma \ref{LemChi1} follows from  Lemma 5.3 in \cite{FracSobolev}.

\begin{lemma}\label{LemChiK}
Let $f \in H^{k+\f{1}{2}}(\Omega)$ and $\chi \in C^{k+1}(\Omega)$ with $\sum_{j = 0}^{k+1}||\rd^j \chi||_{L^\infty(\Omega)} < \infty$. Then $\chi \cdot f \in H^{k + \f{1}{2}}(\Omega)$.
\end{lemma}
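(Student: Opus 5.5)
We argue by induction on $k$, using the product rule for weak derivatives to reduce everything to Lemma \ref{LemChi1} applied to the highest derivative. The case $k=0$ is exactly Lemma \ref{LemChi1}, so assume $k \geq 1$. Since $f \in H^{k+\f12}(\Omega) \subseteq H^k(\Omega)$ and $\chi \in C^{k+1}(\Omega)$ has bounded derivatives up to order $k+1$, the Leibniz rule for weak derivatives applies. (It holds for a product of an $H^k$ function with a $C^k$ function having bounded derivatives, e.g.\ by approximating $f$ locally by smooth functions.) It gives $\chi f \in H^k(\Omega)$ with
\begin{equation*}
\rd^k(\chi f) = \sum_{j=0}^{k} \binom{k}{j}\, \rd^{k-j}\chi \cdot \rd^j f \;.
\end{equation*}
Each term is in $L^2(\Omega)$ because $\rd^{k-j}\chi \in L^\infty(\Omega)$ and $\rd^j f \in L^2(\Omega)$. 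It therefore remains to show that every summand lies in $H^{\f12}(\Omega)$.

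For $j=k$ the summand is $\chi \cdot \rd^k f$. Here $\rd^k f \in H^{\f12}(\Omega)$ by the definition of $H^{k+\f12}(\Omega)$, and $\chi \in C^1(\Omega)$ with $\chi, \rd\chi$ bounded. Lemma \ref{LemChi1} then gives $\chi \cdot \rd^k f \in H^{\f12}(\Omega)$.

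For $0 \leq j \leq k-1$ the summand is $\rd^{k-j}\chi \cdot \rd^j f$. Since $f \in H^k(\Omega)$ and $j+1 \leq k$, we have $\rd^j f \in H^1(\Omega)$. The function $\rd^{k-j}\chi$ lies in $C^1(\Omega)$, because $k-j+1 \leq k+1$, and both it and its derivative are bounded by assumption. The product $\rd^{k-j}\chi \cdot \rd^j f$ is therefore in $L^2(\Omega)$, and its weak derivative $\rd^{k-j+1}\chi \cdot \rd^j f + \rd^{k-j}\chi \cdot \rd^{j+1} f$ is also in $L^2(\Omega)$. So the product lies in $H^1(\Omega)$, which is contained in $H^{\f12}(\Omega)$ by Lemma \ref{Lem1contained}. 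Alternatively, one could apply Lemma \ref{LemChi1} to $\rd^j f \in H^1 \subseteq H^{\f12}$ directly.

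Summing finitely many elements of $H^{\f12}(\Omega)$ gives $\rd^k(\chi f) \in H^{\f12}(\Omega)$. Hence $\chi f \in H^{k+\f12}(\Omega)$. No induction is actually needed, since the argument is direct for every $k$.

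The only point requiring care is the weak Leibniz rule on a general open interval $\Omega$ with merely bounded (not compactly supported) $\chi$. We handle it by testing against $C_c^\infty(\Omega)$ functions, which only involves compact subsets of $\Omega$. On those subsets, standard mollification of $f$ is available, and the boundedness of the derivatives of $\chi$ gives the $L^2$ integrability of the resulting terms.
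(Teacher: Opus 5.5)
Your proposal is correct and follows essentially the same route as the paper: expand $\rd^k(\chi f)$ by the Leibniz rule, handle the top-order term $\chi\cdot\rd^k f$ with Lemma \ref{LemChi1}, and show the lower-order terms $\rd^{k-j}\chi\cdot\rd^j f$ ($j<k$) lie in $H^1(\Omega)\subseteq H^{\f12}(\Omega)$ by differentiating once more and invoking Lemma \ref{Lem1contained}. Your added justification of the weak Leibniz rule via testing against $C_c^\infty(\Omega)$ is a detail the paper takes for granted.
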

\begin{proof}
For $i \leq k$ we have $\rd^i(\chi f) = \sum_{j = 0}^i \binom{i}{j} \rd^j \chi \cdot \rd^{i - j} f \in L^2(\Omega)$, since $\rd^j \chi$ is uniformly bounded. Moreover, $\rd^k(\chi f) = \sum_{j =1}^k \binom{k}{j} \rd^j \chi \cdot \rd^{k - j}f + \chi \cdot \rd^k f$. For the last term we use Lemma \ref{LemChi1} to infer that it is in $H^{\f{1}{2}}(\Omega)$, while for the first term we differentiate once more and use Lemma \ref{Lem1contained}.
\end{proof}

\begin{corollary}\label{CorChi}
Let $\Omega \Subset \R$ be precompact, $\chi \in C^\infty(\overline{\Omega})$, and $|\chi| \geq c > 0$. Then $f \in H^{\f{k}{2}}(\Omega)$ if, and only if, $\chi \cdot f \in H^{\f{k}{2}}(\Omega)$ for $k \in \N_0$.
\end{corollary}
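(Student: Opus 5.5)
We prove Corollary \ref{CorChi} by treating the integer and half-integer cases of $\f{k}{2}$ separately, using Lemma \ref{LemChiK} in both directions. Precompactness of $\Omega$ together with $\chi \in C^\infty(\overline{\Omega})$ gives $\sup_{\overline\Omega}|\rd^j \chi| < \infty$ for every $j$. Because $|\chi| \geq c > 0$, the reciprocal $1/\chi$ is also in $C^\infty(\overline{\Omega})$. Its derivatives $\rd^j(1/\chi)$ are, by the quotient rule (or Fa\`a di Bruno), polynomials in $\rd^i\chi$ divided by powers of $\chi$. They are therefore bounded on $\overline{\Omega}$ by constants depending only on $c$ and finitely many $\|\rd^i\chi\|_{L^\infty}$. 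It thus suffices to prove the forward implication, \emph{if $f \in H^{\f{k}{2}}(\Omega)$ then $\chi f \in H^{\f{k}{2}}(\Omega)$}, for every smooth multiplier with all derivatives bounded. The reverse implication then follows by applying this to $\chi f$ and the multiplier $1/\chi$, since $f = \chi^{-1}(\chi f)$.

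For odd $k$, write $\f{k}{2} = k' + \f12$ with $k' \in \N_0$. The forward implication is then exactly Lemma \ref{LemChiK} with $k'$ in place of $k$; its hypotheses are the boundedness of $\rd^j \chi$ for $j \le k'+1$, which we have. For even $k$, write $\f{k}{2} = k'$. The statement is the classical fact that $H^{k'}(\Omega)$ is stable under multiplication by functions with $k'$ bounded derivatives. This follows from the Leibniz rule for weak derivatives, $\rd^i(\chi f) = \sum_{j\le i}\binom{i}{j}\rd^j\chi\,\rd^{i-j}f$, whose terms each lie in $L^2(\Omega)$ because $\rd^j\chi\in L^\infty$ and $\rd^{i-j} f \in L^2$. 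This is the same computation as the first step of the proof of Lemma \ref{LemChiK}.

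The only step needing care is the claim that $1/\chi \in C^\infty(\overline{\Omega})$ with bounded derivatives. Two facts are needed. First, smoothness up to the closure is preserved under taking reciprocals, because $\chi$ stays bounded away from zero on $\overline{\Omega}$; this is where the lower bound $|\chi| \ge c$ on all of $\Omega$, and hence on $\overline{\Omega}$ by continuity, is used. Second, boundedness of the derivatives uses compactness of $\overline{\Omega}$. Both points are elementary, so no genuine obstacle remains beyond bookkeeping.
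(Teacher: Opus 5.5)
Your proposal is correct and follows the paper's argument. For the forward direction you apply Lemma \ref{LemChiK} (or, for integer order, the Leibniz-rule step from its proof) with multiplier $\chi$, and for the reverse you apply it with multiplier $1/\chi$, whose derivatives are bounded by compactness of $\overline{\Omega}$ and $|\chi| \geq c$; your explicit handling of even $k$ only fills in a detail the paper's one-line proof leaves implicit.
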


\begin{proof}
This follows directly from Lemma \ref{LemChiK} by also using multiplication by $\frac{1}{\chi}$ for the reverse direction.
\end{proof}

The following shows that in the case of $\Omega = \R$ the spaces $H^{k+\f{1}{2}}(\Omega)$ defined via the Gagliardo semi-norm coincide with the definition of fractional Sobolev spaces via the Fourier transform.
\begin{proposition} \label{PropFourierSob}
It holds that $H^{k + \f{1}{2}}(\R) = \{ f \in L^2(\R) \; | \; \int_{\R} (1 + |v|^{2k+1}) |\check{f}(v)|^2 \, d v < \infty\}$. Here $\check{f}$ denotes the Fourier transform of $f$. Moreover, there exists $C>0$ such that
\begin{equation*}
[f]_{H^{\f{1}{2}}(\R)}^2 = C \int_{\R} |v| |\check{f}(v)|^2 dv \;.
\end{equation*}
\end{proposition}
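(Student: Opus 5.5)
We prove Proposition \ref{PropFourierSob} by reducing everything to the single identity for the Gagliardo seminorm, which we establish via Plancherel. The idea is to write the difference quotient in the variable $h = \omega - \xi$ rather than $(\omega,\xi)$. Substituting $\omega = \xi + h$ gives
\begin{equation*}
[f]_{H^{\f12}(\R)}^2 = \int_{\R} \frac{1}{h^2} \int_{\R} |f(\xi + h) - f(\xi)|^2 \, d\xi \, dh \;.
\end{equation*}
Tonelli justifies this, since the integrand is nonnegative. For fixed $h$, translation becomes multiplication by $e^{i h v}$ on the Fourier side, with the sign depending on convention. Plancherel then turns the inner integral into $c_0\int_{\R} |e^{ihv}-1|^2 |\check f(v)|^2\,dv$, and Tonelli again lets us exchange the order of integration.

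This leaves the kernel
\begin{equation*}
\int_{\R} \frac{|e^{ihv}-1|^2}{h^2}\,dh = \int_{\R}\frac{4\sin^2(hv/2)}{h^2}\,dh \;.
\end{equation*}
Substituting $t = h|v|/2$ for $v \neq 0$ shows it equals $2|v|\int_{\R} \frac{\sin^2 t}{t^2}\,dt = 2\pi |v|$. Combining the two steps gives $[f]^2_{H^{\f12}(\R)} = C\int_{\R} |v|\,|\check f(v)|^2\,dv$ with $C = 2\pi c_0$. The identity holds as an equality in $[0,\infty]$ for every $f \in L^2(\R)$, so finiteness of one side is equivalent to finiteness of the other. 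This settles the case $k=0$: $f \in H^{\f12}(\R)$ if and only if $\int(1+|v|)|\check f|^2 < \infty$.

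For general $k$, the standard Fourier characterisation of $H^k(\R)$ says that $f \in H^k(\R)$ if and only if $\int (1+|v|^{2k})|\check f|^2 < \infty$. Moreover, for $f \in H^k(\R)$ we have $\widecheck{\rd^k f}(v) = (\pm i v)^k \check f(v)$. Applying the $k=0$ identity to $\rd^k f \in L^2(\R)$ shows that $\rd^k f \in H^{\f12}(\R)$ if and only if $\int |v|^{2k+1}|\check f|^2 < \infty$.

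It remains to check that the conditions combine correctly. If $\int (1+|v|^{2k+1})|\check f|^2<\infty$, then the elementary bound $|v|^{j} \le 1 + |v|^{2k+1}$ for $0\le j\le 2k+1$ gives $f\in H^k(\R)$. The previous step then gives $\rd^k f \in H^{\f12}(\R)$, so $f \in H^{k+\f12}(\R)$. Conversely, if $f\in H^{k+\f12}(\R)$, then $\int |\check f|^2 < \infty$ and $\int |v|^{2k+1}|\check f|^2 < \infty$, and together these give the claim.

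The main point needing care is the use of Fubini/Tonelli with possibly infinite integrals. Because all integrands are nonnegative, the identity is valid in $[0,\infty]$, and no a priori finiteness is required.
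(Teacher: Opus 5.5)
Your proof is correct and is essentially the argument the paper relies on. The paper simply cites Proposition 3.4 of \cite{FracSobolev}, whose proof is the same computation: the change of variables $\omega=\xi+h$, then Plancherel in $\xi$, then evaluation of the kernel $\int_\R |e^{ihv}-1|^2h^{-2}\,dh=2\pi|v|$. Your passage from $k=0$ to general $k$ via $\widecheck{\rd^k f}=(\pm iv)^k\check f$ fills in the step the paper calls direct, and your observation that the identity holds in $[0,\infty]$ for every $f\in L^2(\R)$ is worth keeping, since the paper later uses the divergent direction in the proof of Proposition \ref{PropIPW}.
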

This follows directly from Proposition 3.4 in \cite{FracSobolev}.

\begin{proposition} \label{PropLocH}
Let $h \in L^2(\R)$ with $h \notin H^{\f{1}{2}}(\R)$ and $\omega \cdot h \in H^{\frac{1}{2}}(\R)$. Then $h \notin H^{\f{1}{2}}\big((-\varepsilon, \varepsilon)\big)$ for any $\varepsilon >0$.
\end{proposition}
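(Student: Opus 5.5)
We argue by contradiction and localise with a cutoff. Suppose that $h \in H^{\f{1}{2}}\big((-\varepsilon,\varepsilon)\big)$ for some $\varepsilon>0$. Choose $\chi \in C^\infty_c(\R)$ with $0 \leq \chi \leq 1$, $\chi \equiv 1$ on $(-\f{\varepsilon}{2}, \f{\varepsilon}{2})$ and $\mathrm{supp}\, \chi \subseteq (-\varepsilon,\varepsilon)$. Split
\begin{equation*}
h = \chi h + (1-\chi) h \;.
\end{equation*}
We show that both pieces lie in $H^{\f{1}{2}}(\R)$. This gives $h \in H^{\f{1}{2}}(\R)$, contradicting the hypothesis.

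\emph{The piece away from the origin.} On $\R \setminus \{0\}$ set $\eta(\omega) := \frac{1-\chi(\omega)}{\omega}$, extended by $0$ near $\omega = 0$. Since $1 - \chi$ vanishes on $(-\f{\varepsilon}{2},\f{\varepsilon}{2})$, $\eta$ is smooth. Moreover $\eta$ and $\eta'$ are bounded on $\R$, as they decay like $1/|\omega|$ and $1/|\omega|^2$ at infinity. We have $(1-\chi) h = \eta \cdot (\omega h)$, and $\omega h \in H^{\f{1}{2}}(\R)$ by hypothesis. Lemma \ref{LemChi1} with $\Omega = \R$ then gives $(1-\chi)h \in H^{\f{1}{2}}(\R)$.

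\emph{The piece near the origin.} This is the main obstacle: $\chi h$ is a function on $\R$, but we only know $h \in H^{\f{1}{2}}$ on the interval $I := (-\varepsilon,\varepsilon)$. By Lemma \ref{LemChi1} on $\Omega = I$, $\chi h \in H^{\f{1}{2}}(I)$. Clearly $\chi h \in L^2(\R)$. It remains to bound the Gagliardo double integral over $\R \times \R$, which we split as $I\times I$, $(\R\setminus I)^2$, and twice the cross term.
\begin{itemize}
\item On $I \times I$ the integral is $[\chi h]^2_{H^{\f{1}{2}}(I)} < \infty$.
\item On $(\R \setminus I)^2$ the integrand vanishes, since $\chi h = 0$ there.
\item On the cross region $\omega \in \mathrm{supp}\,\chi \subseteq [-\varepsilon', \varepsilon']$ with $\varepsilon' < \varepsilon$, and $\xi \notin I$, we have $|\omega - \xi| \geq \varepsilon - \varepsilon' > 0$. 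The integrand is therefore bounded by $|\chi h(\omega)|^2 |\omega-\xi|^{-2}$. Integrating first in $\xi$ over $\{|\xi - \omega| \geq \varepsilon-\varepsilon'\}$ gives a bound of the form $C(\varepsilon-\varepsilon')^{-1}\|h\|^2_{L^2}$.
\end{itemize}
Hence $\chi h \in H^{\f{1}{2}}(\R)$.

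Adding the two pieces, $h \in H^{\f{1}{2}}(\R)$, a contradiction. So $h \notin H^{\f{1}{2}}\big((-\varepsilon,\varepsilon)\big)$ for every $\varepsilon > 0$. (Proposition \ref{PropFourierSob} is not needed here; the argument stays entirely within the Gagliardo framework.)
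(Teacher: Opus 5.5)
Your proof is correct, but it is organised differently from the paper's.

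You localise the \emph{function}. A cutoff $\chi$ splits $h$; the far piece is handled by writing $(1-\chi)h = \frac{1-\chi}{\omega}\cdot(\omega h)$ and applying the multiplier Lemma \ref{LemChi1} on $\R$. The near piece is handled by an extension-by-zero estimate for a function compactly supported inside $(-\varepsilon,\varepsilon)$, and you conclude by contradiction.

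The paper instead localises the \emph{domain of integration} of the Gagliardo double integral, with no cutoff and no multiplier lemma:
\begin{itemize}
\item The region $|\omega-\xi|>\delta$ is controlled by $\|h\|_{L^2}$ alone.
\item On $\{|\omega-\xi|<\delta,\ |\omega+\xi|>\delta\}$ one has $\omega^2+\xi^2\gtrsim\delta^2$. The elementary inequality $\omega^2|h(\omega)-h(\xi)|^2\le 2\big(|\omega h(\omega)-\xi h(\xi)|^2+|\omega-\xi|^2|h(\xi)|^2\big)$ (and its analogue with $\xi^2$) bounds that part by $[\omega h]^2_{H^{\f{1}{2}}(\R)}$ plus an $L^2$ term.
\item Hence the divergence must sit in the diamond $\{|\omega-\xi|<\delta,\ |\omega+\xi|<\delta\}\subseteq(-\delta,\delta)^2$.
\end{itemize}
That inequality is essentially the multiplier lemma for the weight $1/\omega$ done by hand away from the origin, so the underlying mechanism is the same in both proofs.

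Your version is more modular and reuses a lemma the paper already records. It also makes the local-to-global principle explicit: given $\omega h\in H^{\f{1}{2}}(\R)$, $h\in H^{\f{1}{2}}(\R)$ as soon as $h$ is $H^{\f{1}{2}}$ on some neighbourhood of $0$. The paper's version is self-contained and direct, and shows explicitly where the Gagliardo integral diverges.
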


\begin{proof}
The second assumption in the proposition reads $\int_\R \int_\R \frac{|h(\omega) - h(\xi)|^2}{|\omega - \xi|^2} \, d \omega d\xi = \infty$.  We now localise the divergence. First observe that we have for all $\delta >0$
\begin{equation}
\label{EqPfH121}
\begin{split}
\int\int_{|\omega - \xi| > \delta} \frac{|h(\omega) - h(\xi)|^2}{|\omega - \xi|^2} \, d \omega d\xi  & \leq \int\int_{|\omega - \xi| > \delta} \frac{2|h(\omega)|^2 + 2| h(\xi)|^2}{|\omega - \xi|^2} \, d \omega d\xi \\
&= 4\int_{|\chi > \delta|} \int_\R \frac{|h(\omega)|^2}{|\chi|^2} \, d\omega d \chi \\
 &< \infty \;.
\end{split}
\end{equation}
Furthermore, note that
\begin{equation*}
\begin{split}
\omega^2 | h(\omega) - h(\xi)|^2 &= |\omega \cdot h(\omega) - \xi \cdot h(\xi) + \xi \cdot h(\xi) - \omega \cdot h(\xi)|^2 \leq 2 \Big[ | \omega \cdot h(\omega) - \xi \cdot h(\xi)|^2 + |\xi - \omega|^2 \cdot |h(\xi)|^2 \Big] \;. 
\end{split}
\end{equation*}
Hence, we obtain
\begin{equation*}
\begin{split}
\iint\limits_{\substack{|\omega - \xi| < \delta \\ |\omega + \xi| > \delta}} \frac{|h(\omega) - h(\xi)|^2}{|\omega - \xi|^2} \, d \omega d\xi  &\leq c(\delta) \iint\limits_{\substack{|\omega - \xi| < \delta \\ |\omega + \xi| > \delta}} \frac{(\omega^2 + \xi^2) |h(\omega) - h(\xi)|^2}{|\omega - \xi|^2} \, d \omega d\xi \\
&\leq c(\delta) \iint\limits_{\substack{|\omega - \xi| < \delta \\ |\omega + \xi| > \delta}} \Big(\frac{4|\omega \cdot h(\omega) - \xi \cdot h(\xi)|^2}{|\omega - \xi|^2}  + 2 \big(|h(\xi)|^2 + |h(\omega)|^2\big) \Big)\, d \omega d\xi \\
&\leq 4 c(\delta) \Big( [\omega \cdot h]_{H^{\f{1}{2}}(\R)} + \iint\limits_{\substack{|\omega - \xi| < \delta \\ |\omega + \xi| > \delta}} |h(\xi)|^2 \, d\omega d \xi \Big) \\
&< \infty \;.
\end{split}
\end{equation*}
\noindent %
\begin{minipage}{0.48\textwidth}
Together with \eqref{EqPfH121} this gives for all $\delta >0$
\[
\iint\limits_{\substack{|\omega - \xi| < \delta \\ |\omega + \xi| < \delta}} \frac{|h(\omega) - h(\xi)|^2}{|\omega - \xi|^2} \, d \omega d\xi = \infty \;, 
\]
from which the proposition follows.
\end{minipage}%
\hfill 
\begin{minipage}{0.48\textwidth}
\centering
  \def\svgwidth{4cm} 
\begingroup%
  \makeatletter%
  \providecommand\color[2][]{%
    \errmessage{(Inkscape) Color is used for the text in Inkscape, but the package 'color.sty' is not loaded}%
    \renewcommand\color[2][]{}%
  }%
  \providecommand\transparent[1]{%
    \errmessage{(Inkscape) Transparency is used (non-zero) for the text in Inkscape, but the package 'transparent.sty' is not loaded}%
    \renewcommand\transparent[1]{}%
  }%
  \providecommand\rotatebox[2]{#2}%
  \newcommand*\fsize{\dimexpr\f@size pt\relax}%
  \newcommand*\lineheight[1]{\fontsize{\fsize}{#1\fsize}\selectfont}%
  \ifx\svgwidth\undefined%
    \setlength{\unitlength}{419.62340828bp}%
    \ifx\svgscale\undefined%
      \relax%
    \else%
      \setlength{\unitlength}{\unitlength * \real{\svgscale}}%
    \fi%
  \else%
    \setlength{\unitlength}{\svgwidth}%
  \fi%
  \global\let\svgwidth\undefined%
  \global\let\svgscale\undefined%
  \makeatother%
  \begin{picture}(1,0.77415652)%
    \lineheight{1}%
    \setlength\tabcolsep{0pt}%
    \put(0,0){\includegraphics[width=\unitlength,page=1]{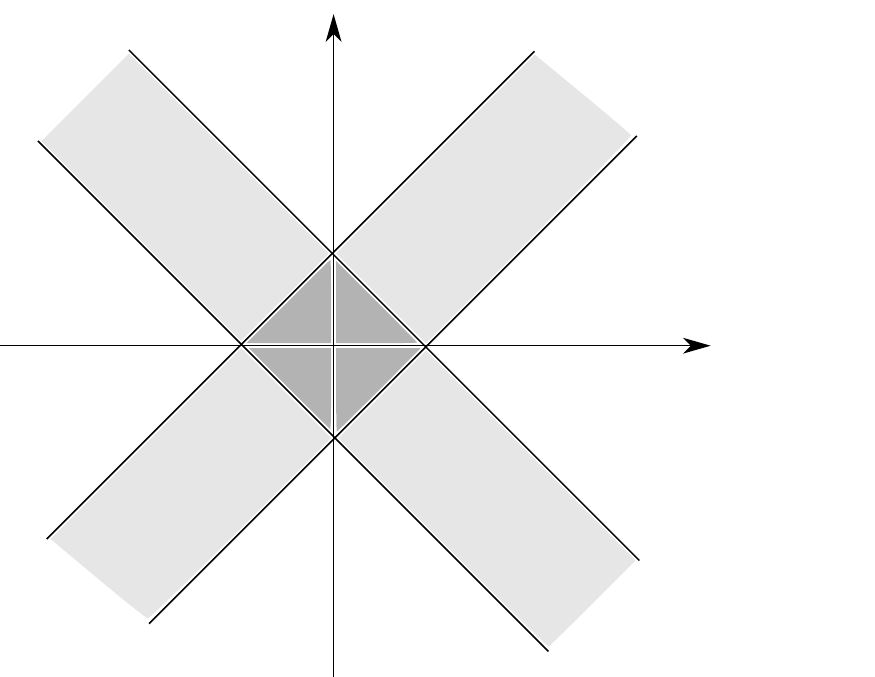}}%
    \put(0.80078481,0.320892){\color[rgb]{0,0,0}\makebox(0,0)[lt]{\lineheight{1.25}\smash{\begin{tabular}[t]{l}$\omega$\end{tabular}}}}%
    \put(0.40761971,0.71983901){\color[rgb]{0,0,0}\makebox(0,0)[lt]{\lineheight{1.25}\smash{\begin{tabular}[t]{l}$\xi$\end{tabular}}}}%
    \put(0,0){\includegraphics[width=\unitlength,page=2]{H12.pdf}}%
    \put(0.78922098,0.52036551){\color[rgb]{0,0,0}\makebox(0,0)[lt]{\lineheight{1.25}\smash{\begin{tabular}[t]{l}$|\omega - \xi| < \delta$\end{tabular}}}}%
    \put(0.76898461,0.17345506){\color[rgb]{0,0,0}\makebox(0,0)[lt]{\lineheight{1.25}\smash{\begin{tabular}[t]{l}$|\omega + \xi| < \delta$\end{tabular}}}}%
    \put(0,0){\includegraphics[width=\unitlength,page=3]{H12.pdf}}%
  \end{picture}%
\endgroup%

   Domains of integration
\end{minipage}
\end{proof}

\subsection{Relating $\check{\psi}_{ml}(\omega)$ to $\check{\psi}_{S(ml)}(\omega)$ near $\omega = 0$}

\begin{proposition} \label{PropRelateSNew}
Let $h \in L^2_{v_+}L^2_{\Sp^2}$ and $1 \leq {\mathfrak{p}} \in \N$ such that 
\begin{equation}\label{PropA}
\int\limits_{-\infty}^\infty \int\limits_{\Sp^2} (1 + |v_+|^{2({\mathfrak{p}}-1)}) |h(v_+, \theta, \varphi_+)|^2 \, \vols dv_+ < \infty
\end{equation}
and
\begin{equation}
\label{PropB}
\int\limits_{-\infty}^\infty \int\limits_{\Sp^2} (1 + |v_+|^{2\mathfrak{p}-1}) |\rd_{v_+}h(v_+, \theta, \varphi_+)|^2 \, \vols dv_+ < \infty
\end{equation}
hold. Then for all $|m_0| \leq l_0$, $l_0 \geq 2$ there exists $\delta_0(m_0,l_0) >0$ such that if  $\delta_0 > \delta >0$, we have $\check{h}_{S(m_0l_0)} \in H^{{\mathfrak{p}} - \f{1}{2}}\big((-\delta, \delta)\big)$ if, and only if, $\check{h}_{m_0l_0} \in H^{{\mathfrak{p}} - \f{1}{2}}\big((- \delta, \delta) \big)$.
\end{proposition}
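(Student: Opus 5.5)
The idea is to write the spheroidal projection $\check h_{m_0l_0}(\omega)$ as a smooth, nonvanishing multiple of the spherical projection $\check h_{S(m_0l_0)}(\omega)$, plus a remainder that always lies in $H^{\mathfrak{p}-\f{1}{2}}$ near $\omega = 0$. Corollary \ref{CorChi} then gives the equivalence.

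\textbf{Setup and decomposition.} Here $\check h_{m_0l_0}(\omega) = \langle \check h(\omega,\cdot), S^{[2]}_{m_0l_0}(\cos\theta; a\omega)e^{im_0\varphi_+}\rangle_{L^2(\Sp^2)}$ is the projection of the $v_+$-Fourier transform onto the frequency-dependent spin-weighted spheroidal harmonic. By \underline{Proposition 5.2}, the eigenvalue belonging to $(m_0,l_0)$ is simple at $\nu = a\omega = 0$. Analytic perturbation theory then shows that, for $|\omega| < \delta_0(m_0,l_0)$, the normalised eigenfunction $\omega \mapsto S_{m_0l_0}(a\omega)e^{im_0\varphi_+}$ is a smooth (indeed analytic) map into $L^2(\Sp^2)$, with all $\omega$-derivatives bounded there. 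At $\omega = 0$ it equals $Y^{[2]}_{m_0l_0}(\cdot;0)$. Set
\[
c(\omega) := \langle Y^{[2]}_{m_0l_0}(\cdot;0), S_{m_0l_0}(a\omega)e^{im_0\varphi_+}\rangle_{L^2(\Sp^2)} \;.
\]
Then $c(0) = 1$, so after shrinking $\delta_0$ we have $|c| \geq \f{1}{2}$ on $(-\delta_0,\delta_0)$. We decompose
\[
S_{m_0l_0}(a\omega)e^{im_0\varphi_+} = \overline{c(\omega)}\,Y^{[2]}_{m_0l_0}(\cdot;0) + \omega E(\omega) \;.
\]
Here $E(\omega) := \big(S_{m_0l_0}(a\omega)e^{im_0\varphi_+} - \overline{c(\omega)}Y^{[2]}_{m_0l_0}\big)/\omega$. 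This is again smooth in $\omega$ with values in $L^2(\Sp^2)$, because the numerator is smooth and vanishes at $\omega = 0$ (Taylor's formula with integral remainder). It follows that
\[
\check h_{m_0l_0}(\omega) = c(\omega)\,\check h_{S(m_0l_0)}(\omega) + R(\omega), \qquad R(\omega) := \langle \omega\check h(\omega,\cdot), E(\omega)\rangle_{L^2(\Sp^2)} \;.
\]

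\textbf{Regularity of $\omega\check h$.} Since $\widecheck{\rd_{v_+}h} = i\omega\check h$, assumption \eqref{PropB} together with Plancherel and a vector-valued version of Proposition \ref{PropFourierSob} shows that $\omega\check h \in H^{\mathfrak{p}-\f{1}{2}}(\R; L^2(\Sp^2))$. The vector-valued version follows by expanding in an orthonormal basis of spherical harmonics, applying the scalar statement to each coefficient and summing. Similarly, \eqref{PropA} gives $\check h \in H^{\mathfrak{p}-1}(\R;L^2(\Sp^2))$. In particular $\check h_{S(m_0l_0)}$ and $\check h_{m_0l_0}$ lie in $L^2$, and the statement is not vacuous.

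\textbf{Remainder and conclusion.} Next I prove that $R \in H^{\mathfrak{p}-\f{1}{2}}((-\delta,\delta))$. This is the pairing of an $H^{\mathfrak{p}-\f{1}{2}}$ Hilbert-valued function with a $C^{\mathfrak{p}}$ Hilbert-valued function whose derivatives are bounded. I would repeat the proofs of Lemma \ref{LemChi1} and Lemma \ref{LemChiK} in this setting.
\begin{itemize}
\item First, Leibniz' rule gives derivatives up to order $\mathfrak{p}-1$ in $L^2$.
\item For the top derivative, terms in which $E$ is differentiated at least once are handled by differentiating once more and using Lemma \ref{Lem1contained}; this uses $\check h \in H^{\mathfrak{p}-1}$ and $\omega\check h \in H^{\mathfrak{p}-\f{1}{2}}$.
\item The remaining term $\langle \rd^{\mathfrak{p}-1}(\omega\check h), E\rangle$ is handled via the Gagliardo seminorm. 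One splits $F(\omega)E(\omega) - F(\xi)E(\xi) = (F(\omega)-F(\xi))E(\omega) + F(\xi)(E(\omega)-E(\xi))$ and applies Cauchy--Schwarz in $L^2(\Sp^2)$, together with the Lipschitz bound on $E$.
\end{itemize}
Finally, Lemma \ref{LemRest} lets us restrict to $(-\delta,\delta)$. Since $R \in H^{\mathfrak{p}-\f{1}{2}}$, we get $\check h_{m_0l_0} \in H^{\mathfrak{p}-\f{1}{2}}$ if and only if $c\,\check h_{S(m_0l_0)} \in H^{\mathfrak{p}-\f{1}{2}}$. By Corollary \ref{CorChi} with $\chi = c$, the latter holds if and only if $\check h_{S(m_0l_0)} \in H^{\mathfrak{p}-\f{1}{2}}$.

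\textbf{Main obstacle.} The step I expect to need the most care is the smooth (in $L^2(\Sp^2)$-norm) dependence of the spheroidal harmonic $S_{m_0l_0}(a\omega)$ on $\omega$ near $0$, with uniform bounds on enough derivatives. I would obtain it from simplicity of the eigenvalue at $\nu = 0$ (as in \underline{Proposition 5.2}) and Kato's analytic perturbation theory for the spin-weighted spheroidal operator. This treats the term $a^2\omega^2\cos^2\theta$ plus the first-order-in-$\omega$ term as a bounded analytic perturbation of $\swl$ restricted to azimuthal number $m_0$. The smallness of $\delta_0$ is what keeps the eigenvalue isolated and ensures $|c| \geq \f{1}{2}$.
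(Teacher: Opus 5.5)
Your proposal is correct and is essentially the paper's argument written in vector-valued form. Your $\omega E(\omega)$ is exactly the paper's tail $\sum_{l\neq l_0}E^{[s]}_{m_0l_0l}(\omega)\,Y^{[2]}_{m_0l}$, which the paper likewise controls via $E^{[s]}_{m_0l_0l}(0)=0$, the $H^{\mathfrak{p}-\frac12}$-regularity of $\omega\check h$ coming from \eqref{PropB}, and a Gagliardo splitting, before concluding with Corollary \ref{CorChi} since $E^{[s]}_{m_0l_0l_0}(0)=1$. The only difference is that you factor $\omega$ out of the whole remainder rather than only in the top-order term, which is slightly cleaner and shows that the remainder estimate needs \eqref{PropA} only through $h\in L^2$.
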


\begin{proof}
The proof uses similar techniques to those used in the proof of \underline{Proposition 7.5}. We first note that
$$\int\limits_{-\infty}^\infty \int\limits_{\Sp^2} (1 + |v_+|^{2p-1}) |\rd_{v_+}h(v_+, \theta, \varphi_+)|^2 \, \vols dv_+ = \sum_{m,l}\int\limits_{-\infty}^\infty (1 + |v_+|^{2p-1}) |\rd_{v_+} h_{S(ml)}(v_+)|^2 \, dv_+  $$
so that \eqref{PropA} and \eqref{PropB} imply
\begin{equation}
\label{EqPropA}
\int\limits_{-\infty}^\infty \sum_{m,l} |\rd_\omega^{\mathfrak{q}} \check{h}_{S(ml)}|^2 \, d \omega < \infty \quad \textnormal{ for } 0 \leq {\mathfrak{q}} \leq {\mathfrak{p}}-1
\end{equation}
and, by Proposition  \ref{PropFourierSob},
\begin{equation} \label{EqPropNew}
 \sum_{m,l}\int\limits_{-\infty}^\infty |\rd_\omega^{\mathfrak{q}} (\omega \check{h}_{S(ml)})|^2 \, d \omega < \infty  \textnormal{ for } 0 \leq {\mathfrak{q}} \leq {\mathfrak{p}}-1 \textnormal{ and } \sum_{m,l} \big[\rd_\omega^{{\mathfrak{p}}-1} (\omega  \check{h}_{S(ml)})\big]_{H^{\nicefrac{1}{2}}(\R)}^2 < \infty \;.
\end{equation}

Using $\rd_\omega^{\mathfrak{q}}(\omega \check{h}_{S(ml)}) = {\mathfrak{q}} \cdot \rd_\omega^{{\mathfrak{q}}-1} \check{h}_{S(ml)} + 
\omega \rd_\omega^{\mathfrak{q}} \check{h}_{S(ml)}$ we obtain from \eqref{EqPropA}, \eqref{EqPropNew}, and Lemma \ref{Lem1contained} in particular
\begin{equation}
\label{EqPropB}
 \sum_{m,l} \big[\omega \rd_\omega^{{\mathfrak{p}}-1}  \check{h}_{S(ml)}\big]^2_{H^{\nicefrac{1}{2}}(\R)} < \infty  \;.
\end{equation}
Recall now the $\omega$-dependent change of orthonormal basis in $L^2([-1,1], d \cos \theta)$
$$S_{ml}^{[s]}(\cos \theta; \omega) = \underbrace{\int_{[-1,1]} S_{ml}^{[s]}(\cos \theta; \omega) S_{ml'}^{[s]}(\cos \theta; 0) \, d \cos \theta}_{=:E^{[s]}_{mll'}(\omega)} \;\cdot\, S^{[s]}_{ml'}(\cos \theta; 0) \;,$$
and that $\sup_{|\omega| \leq \omega_0}\sum_{l'} |\rd_\omega^{\mathfrak{q}} E_{mll'}^{[s]}(\omega)|^2  \leq C(\omega_0, m,l,{\mathfrak{q}})$ by the smoothness of $S_{ml}^{[s]}(\cos \theta; \omega)$ in $\omega$, see \underline{(7.12)}. We have 
\begin{equation} \label{EqHML}
\check{h}_{m_0 l_0}(\omega) = \sum_l  E_{m_0l_0l}^{[s]} (\omega)   \cdot \check{h}_{S(m_0l)}(\omega)
\end{equation} 
in $ L^2_\omega(\R)$. We first consider the terms with $l \neq l_0$ and claim that
\begin{equation}
\label{EqPropClaim}
\sum_{l \neq l_0} E_{m_0l_0l}^{[s]} (\omega)   \cdot  \widecheck{h}_{S(m_0l)}(\omega) \in H^{{\mathfrak{p}}- \frac{1}{2}}\big((-\delta, \delta)\big)
\end{equation}
for each $\delta >0$. 
To prove this, we  compute for $0 \leq {\mathfrak{q}} \leq {\mathfrak{p}}-1$
\begin{equation} \label{EqStructure}\rd_\omega^{\mathfrak{q}} \sum_{l \neq l_0} E_{m_0l_0l}^{[s]} (\omega)   \cdot  \widecheck{h}_{S(m_0l)}(\omega) = \sum_{l \neq l_0} \sum_{{\mathfrak{q}}' = 0}^{\mathfrak{q}} \binom{{\mathfrak{q}}}{{\mathfrak{q}}'} \rd_\omega^{{\mathfrak{q}} - {\mathfrak{q}}'} E_{m_0l_0l}^{[s]} (\omega) \cdot \rd_\omega^{{\mathfrak{q}}'} \widecheck{h}_{S(m_0l)}(\omega) \;.
\end{equation}
For ${\mathfrak{q}}' \leq {\mathfrak{p}}-1$ we estimate
$$ \int\limits_{(-\delta, \delta)} \Big| \sum_{l \neq l_0}\rd_\omega^{{\mathfrak{q}} - {\mathfrak{q}}'} E_{m_0l_0l}^{[s]} (\omega) \cdot \rd_\omega^{{\mathfrak{q}}'} \widecheck{h}_{S(m_0l)}(\omega)\Big|^2 \, d \omega \leq \int\limits_{(-\delta, \delta)} \underbrace{\Big( \sum_{l \neq l_0} |\rd_\omega^{{\mathfrak{q}} - {\mathfrak{q}}'} E_{m_0l_0l}^{[s]} (\omega)|^2\Big)}_{\leq C(\delta, m_0, l_0,{\mathfrak{q}}-{\mathfrak{q}}')} \Big( \sum_{l \neq l_0} |\rd_\omega^{{\mathfrak{q}}'} \widecheck{h}_{S(m_0l)}(\omega)|^2\Big) \, d \omega < \infty\;, $$
where we have used \eqref{EqPropA}. 
We now need to show that $\rd_\omega^{{\mathfrak{p}}-1} \sum_{l \neq l_0} E_{m_0l_0l}^{[s]} (\omega)   \cdot  \widecheck{h}_{S(m_0l)}(\omega) \in H^{ \frac{1}{2}}\big((-\delta, \delta)\big)$. The structure of this expression is given in \eqref{EqStructure} with ${\mathfrak{q}} = {\mathfrak{p}}-1$. Consider the terms with ${\mathfrak{q}}' < {\mathfrak{p}}-1$. For those we can take another derivative and repeat the above argument to show that they are even in $H^{1}\big((-\delta, \delta)\big)$. So it remains to show that 
\begin{equation} \label{EqNewTerm}
\sum_{l \neq l_0}E_{m_0l_0l}^{[s]} (\omega)   \cdot \rd_{\omega}^{{\mathfrak{p}}-1} \widecheck{h}_{S(m_0l)}(\omega) \in  H^{ \frac{1}{2}}\big((-\delta, \delta)\big)\;.
\end{equation}
We use that $E_{m_0l_0l}^{[s]}(0) =0$ for $l \neq l_0$ and thus 
\begin{equation}
\label{EqBoundOneOverOmega}
\sum_{l \neq l_0}\Big| \frac{1}{\omega} E_{m_0l_0l}^{[s]} (\omega)\Big|^2 = \sum_{l \neq l_0}\Big| \frac{1}{\omega}  \int_0^\omega \rd_{\omega'} E_{m_0l_0l}^{[s]}(\omega') \, d \omega'\Big|^2 \leq C
\end{equation}
uniformly for $\omega \in (-\delta, \delta)$ by the same computation as below $\underline{(7.14)}$. Furthermore, we note that
$$\frac{\xi}{\omega}  E_{m_0l_0l}^{[s]} (\omega) - E_{m_0l_0l}^{[s]} (\xi) = \big( \frac{\xi}{\omega} - 1\big) E_{m_0l_0l}^{[s]} (\omega) + E_{m_0l_0l}^{[s]} (\omega) - E_{m_0l_0l}^{[s]} (\xi) = (\xi - \omega) \frac{1}{\omega} E_{m_0l_0l}^{[s]} (\omega) + \int_{\xi}^\omega \rd_{\omega'} E_{m_0l_0l}^{[s]} (\omega') \, d\omega'$$
and thus
\begin{equation}\label{EqSecondBoundNew}
\begin{split}
\frac{1}{2}\sum_{l \neq l_0}\Big|\frac{\xi}{\omega}  E_{m_0l_0l}^{[s]} (\omega) - E_{m_0l_0l}^{[s]} (\xi) \Big|^2  &\leq \sum_{l \neq l_0} |\xi - \omega|^2 \Big| \frac{1}{\omega} E_{m_0l_0l}^{[s]} (\omega)\Big|^2 + \sum_{l \neq l_0} \Big| \int_{\xi}^\omega \rd_{\omega'} E_{m_0l_0l}^{[s]} (\omega') \, d\omega' \Big|^2 \\
&\leq |\xi - \omega|^2 C +  \sum_{l \neq l_0} |\omega - \xi| \int_{\xi}^\omega | \rd_{\omega'} E_{m_0l_0l}^{[s]} (\omega') |^2\, d\omega' \leq 2|\xi - \omega|^2 C \;,
\end{split}
\end{equation}
where we have used \eqref{EqBoundOneOverOmega}. Using \eqref{EqBoundOneOverOmega} and \eqref{EqSecondBoundNew} we now prove \eqref{EqNewTerm}:
\begin{equation*}
\begin{split}
\frac{1}{2}\int\limits_{(-\delta, \delta)^2}& \frac{ \Big| \sum_{l \neq l_0}E_{m_0l_0l}^{[s]} (\omega)   \cdot \rd_{\omega}^{{\mathfrak{p}}-1} \widecheck{h}_{S(m_0l)}(\omega) - \sum_{l \neq l_0}E_{m_0l_0l}^{[s]} (\xi)   \cdot \rd_{\xi}^{{\mathfrak{p}}-1} \widecheck{h}_{S(m_0l)}(\xi) \Big|^2}{|\omega - \xi|^2} \, d \omega d \xi \\
&\leq \int\limits_{(-\delta, \delta)^2} \frac{\Big|\sum_{l \neq l_0} \frac{E_{m_0l_0l}^{[s]} (\omega) }{\omega} \Big( \omega  \rd_{\omega}^{{\mathfrak{p}}-1} \widecheck{h}_{S(m_0l)}(\omega) - \xi  \rd_{\xi}^{{\mathfrak{p}}-1} \widecheck{h}_{S(m_0l)}(\xi)\Big)|^2}{|\omega - \xi|^2} \, d \omega d \xi \\
&\quad +  \int\limits_{(-\delta, \delta)^2} \frac{\Big|\sum_{l \neq l_0} \xi \rd_{\xi}^{{\mathfrak{p}}-1} \widecheck{h}_{S(m_0l)}(\xi) \Big( \frac{ E_{m_0l_0l}^{[s]} (\omega) }{\omega} -  \frac{ E_{m_0l_0l}^{[s]} (\xi) }{\xi} \Big) \Big|^2}{|\omega - \xi|^2} \, d \omega d \xi\\
&\leq  \int\limits_{(-\delta, \delta)^2} \sum_{l \neq l_0}\Big| \frac{1}{\omega} E_{m_0l_0l}^{[s]} (\omega)\Big|^2 \cdot \frac{ \sum_{l \neq l_0} \Big|  \omega  \rd_{\omega}^{{\mathfrak{p}}-1} \widecheck{h}_{S(m_0l)}(\omega) - \xi  \rd_{\xi}^{{\mathfrak{p}}-1} \widecheck{h}_{S(m_0l)}(\xi)\Big|^2}{|\omega - \xi|^2} \, d \omega d\xi \\
&\quad + \int\limits_{(-\delta, \delta)^2}   \sum_{l \neq l_0} | \rd_{\xi}^{{\mathfrak{p}}-1} \widecheck{h}_{S(m_0l)}(\xi)|^2 \cdot \frac{ \sum_{l \neq l_0}\Big|\frac{\xi}{\omega}  E_{m_0l_0l}^{[s]} (\omega) - E_{m_0l_0l}^{[s]} (\xi)\Big|^2}{|\omega - \xi|^2} \, d \omega d \xi <\infty\;,
\end{split}
\end{equation*}
where we have also used  \eqref{EqPropB} and \eqref{EqPropA}. This finally establishes \eqref{EqPropClaim}.

We thus obtain 
\begin{equation} \label{PropPfE} 
\widecheck{h}_{m_0 l_0}(\omega) = E_{m_0l_0l_0}^{[s]} (\omega)   \cdot \widecheck{h}_{S(m_0l_0)}(\omega) + \Big(\textnormal{terms in } H^{{\mathfrak{p}}- \frac{1}{2}}\big((-\delta, \delta)\big) \Big)\;.\end{equation} 
Now, if $\widecheck{h}_{S(m_0l_0)} \in H^{{\mathfrak{p}} - \f{1}{2}}\big((-\delta, \delta) \big)$, then the smoothness of $E_{m_0l_0l_0}^{[s]}$, together with Lemma \ref{LemChiK}, imply $\widecheck{h}_{m_0 l_0} \in H^{{\mathfrak{p}} - \f{1}{2}}\big((-\delta, \delta) \big)$. For the reverse direction we observe that there exists $\delta_0(m_0,l_0)>0$ such that $E_{m_0l_0l_0}^{[s]}(\omega) > \frac{1}{2}$ for all $|\omega| < \delta_0$. This follows from the smoothness of $E_{m_0l_0l_0}^{[s]}$ together with $E_{m_0l_0l_0}^{[s]}(0) = 1$. The result then follows from Corollary \ref{CorChi}.
\end{proof}

\section{Proof of Theorem \ref{MainThm}}

We begin by pointing out that the results in \cite{Sbie23} are stated for smooth spin $2$-weighted functions $\mathscr{I}^\infty_{[2]}(\Mun)$, but that this is nowhere needed -- finite regularity suffices.

We now apply \underline{Theorem 8.6} to extend $\psi \in \mathscr{I}^{10}_{[2]}(\Mun\cap \{f_- \leq v_1\} \cap \{f_+ \geq v_0\} )$ to a  spin $2$-weighted function on all of $\Mun$, denoted again by $\psi \in \mathscr{I}^8_{[2]}(\Mun)$, such that $\psi$ satisfies \underline{Assumption 2.46} and \underline{condition (3.5)}. Note that since $\psi$ satisfies \underline{Assumption 2.46} and $\rd_{v_+}$ is a smooth vector field on $\Mun$ (see \underline{A.1}) which commutes with the Teukolsky operator, we also have that $\rd_{v_+} \psi$ satisfies \underline{Assumption 2.46}. Furthermore, since $\rd_{v_+}$ is tangent to $\Hp_l$, $\rd_{v_+} \psi$ also satisfies \underline{condition (3.5)}. We first propagate the desired bounds to the left Cauchy horizon, i.e., we show

\begin{minipage}{0.45\textwidth}
\begin{flalign}
&\int\limits_{\CH_l \cap \{v_+ \geq 1\}} v_+^{q} | \psi_{S(l = 2)}|^2 \, \vols dv_+ = \infty \label{EqThmConA'} & \\
&\int\limits_{\CH_l \cap \{v_+ \geq 1\}} v_+^{q_-} | \psi|^2 \, \vols dv_+ < \infty & \label{EqThmConB'} 
\end{flalign}
\end{minipage}
\begin{minipage}{0.05\textwidth}
${}$
\end{minipage}
\begin{minipage}{0.45\textwidth}
\begin{align}
&\int\limits_{\CH_l \cap \{v_+ \geq 1\}} v_+^{q} |\rd_{v_+} \psi|^2 \, \vols dv_+ < \infty \label{EqThmConC'} \\
&\int\limits_{\CH_l \cap \{v_+ \geq 1\}} v_+^{q} | \psi_{S(l > 2)}|^2 \, \vols dv_+ < \infty \label{EqThmConD'} \;.
\end{align}
\end{minipage}

We observe that all results proved in \underline{Section 4} only depend on \underline{Assumption 2.46}, \underline{(3.4)}, and \underline{(3.5)} (or \underline{(3.6)}). Given our assumption \eqref{EqThmB}, we can set $q_r = q_-$ in \underline{(3.4)} and $q_l \geq q $ to obtain all results in \underline{Section 4} with those new values for $q_r$ and $q_l$. In particular, \un{Proposition 4.64} gives \eqref{EqThmConB'}.

Similarly, since $\rd_{v_+} \psi$ satisfies \un{Assumption 2.46} and \un{(3.5)}, and by our assumption \eqref{EqThmC}, also \un{(3.4)} with $q_r = q$, we also obtain all results in \un{Section 4} with $\psi$ replaced by $\rd_{v_+} \psi$ and the new value for $q_r$. Again, \un{Proposition 4.64} gives \eqref{EqThmConC'}. For later reference, we also take note
that we have
\begin{align}
&\textnormal{\un{Proposition 4.64} with $\psi$ replaced by $\rd_{v_+} \psi$ and $q_r = q$ and $q_l \geq q$} \label{EqV4.64}  \\
&\textnormal{\un{(4.19)} with $\psi$ replaced by $\rd_{v_+} \psi$ and $q_r = q$} \label{EqV4.19}  \\
&\textnormal{\un{(4.32)} with $\psi$ replaced by $\rd_{v_+} \psi$ and $q_r = q$} \label{EqV4.32}  \\
&\textnormal{\un{(4.52)} with $\psi$ replaced by $\rd_{v_+} \psi$ and $q_r = q$} \label{EqV4.52} 
\end{align}

We now prove \eqref{EqThmConD'}. \un{Proposition 4.1} (which only depends on Assumption \un{2.46} and \un{(3.5)} (or \un{(3.6)}) is applied unchanged to $\psi$ with $q_l \geq q$. We claim that \un{Proposition 4.11} holds indeed with $\psi$ replaced by $\mathbb{P}_{S(l>2)} \psi$ and $q_r = q$. This, however, is non-trivial, since $\Pb$ does not commute with the Teukolsky operator. 

Using that $\psi \in \mathscr{I}^8_{[2]}(\Mun)$, it is immediate from \eqref{DefP} that $\Pb$ commutes with $\rd_{v_+}, \rd_r, \rd_{\varphi_+}, \swl$ and multiplication by functions in $r$. We thus infer from \un{(2.39)} that $\Pb \psi$ satisfies the following inhomogeneous Teukolsky equation
\begin{equation}\label{EqComEq}
0 = \mathcal{T}_{[2]} \Pb \psi - a^2 [ \sin^2 \theta, \Pb] \, \rd_{v_+}^2 \psi + 2isa [ \cos \theta, \Pb ] \, \rd_{v_+} \psi \;.
\end{equation}
Note that the inhomogeneities all have at least one $\rd_{v_+}$ derivative and thus, as we have already shown, decay slightly better than $\psi$. This is what allows us to propagate the slightly improved decay of $\Pb \psi$. We flesh out the argument in the following.
First note that for $h \in L^2(\Sp^2)$ we have
\begin{equation} \label{EqCommutator}
\begin{split}
|| [ \sin^2 \theta, \Pb] \, h ||_{L^2(\Sp^2)} &\leq 2 ||h||_{L^2(\Sp^2)} \\
|| [ \cos \theta, \Pb] \, h ||_{L^2(\Sp^2)} &\leq 2 ||h||_{L^2(\Sp^2)} \;.
\end{split}
\end{equation}
We now go through the proof of \un{Proposition 4.11} with \eqref{EqComEq} in place of $\mathcal{T}_{[2]} \psi = 0$. Differentiating \eqref{EqComEq} twice in $r$, we then use the same multiplier as in \un{(4.14)} with $\psi$ replaced by $\Pb \psi$ and $q_r = q$. The two additional terms arising on the right hand side of \un{(4.14)} are
\begin{equation}
\label{EqTwoAddT}
\Big(- a^2 [ \sin^2 \theta, \Pb] \, \rd_{v_+}^2 \rd_r^2 \psi + 2isa [ \cos \theta, \Pb ] \, \rd_{v_+} \rd_r^2 \psi \Big) \Big( v_+^{q} \big(-(1 + \lambda \Delta) \rd_r + (1 + \lambda \Delta) \rd_{v_+} \big) \overline{ \rd_r^2 \Pb \psi} \Big)
\end{equation}
Using \eqref{EqCommutator} and Cauchy Schwarz,  after integration over the spheres we obtain for $\delta >0$
\begin{equation}\label{EqAI}
\Big| \textnormal{\eqref{EqTwoAddT}} \Big| \underset{a.i.}{\lesssim} v_+^{q} \delta \big( | \rd_r^3 \Pb \psi|^2 + | \rd_{v_+} \rd_r^2 \Pb \psi|^2 \big) + v_+^{q} \frac{1}{\delta} \big( a^4 \cdot | \rd_r^2 \rd_{v_+}^2 \psi |^2 + s^2 a^2 | \rd_r^2 \rd_{v_+} \psi|^2 \big)
\end{equation}
The right hand side of \eqref{EqAI} will appear with a minus sign on the right hand side of the equation above \un{4.18} with $\psi$ replaced by $\Pb \psi$ and $q_r = q$. For $\delta >0$ sufficiently small the first summand in \eqref{EqAI} can be absorbed by the right hand side. The second summand in \eqref{EqAI} is bounded by  \eqref{EqV4.19}   so that we obtain \un{(4.18)} and \un{(4.19)} with an additional constant on the right hand side and with $\psi$ replaced by $\Pb \psi$ and $q_r = q$. The rest follows as in \underline{Step 6}. Thus, \un{Proposition 4.11} holds indeed with $\psi$ replaced by $\Pb \psi$ and $q_r = q$.

We move on to \un{Corollary 4.21}, which combines \un{Proposition 4.1} with \un{Proposition 4.11}. We do exactly the same change of coordinates as in the proof of \un{Corollary 4.21} to translate \un{Proposition 4.1}, with $q_l \geq q$, into $(v_+, r, \theta, \varphi_+)$ coordinates and $\hat{\psi}$ into $\psi$. This gives \un{(4.22)} in the domain $\{v_+ \leq 1\}$. We then 
use
\begin{equation*}
\begin{split}
||\widetilde{Z}_{i,+} \Pb \psi||_{L^2(\Sp^2)} & \leq || \widetilde{Z}_{i,+} \psi||_{L^2(\Sp^2)} + \sum_{m=-2}^2 || \widetilde{Z}_{i,+} \mathbb{P}_{S(m2)}\psi||_{L^2(\Sp^2)} \\
&\leq ||\widetilde{Z}_{i,+}  \psi||_{L^2(\Sp^2)} + \sum_{m=-2}^2 || \psi||_{L^2(\Sp^2)} \cdot || \widetilde{Z}_{i,+}  Y^{[2]}_{m2}||_{L^2(\Sp^2)} \\
&\leq ||\widetilde{Z}_{i,+}  \psi||_{L^2(\Sp^2)} + C_{l=2} ||\psi||_{L^2(\Sp^2)}
\end{split}
\end{equation*}
to obtain
\begin{equation*}
\begin{split}
\sum_{0 \leq i_1 + i_2 + i_3 + j + k \leq 1} \int_{\{r=r'\} \cap \{v_+ \leq 1\}} &\chi(v_+) |\widetilde Z_{1,+}^{i_1}\widetilde Z_{2,+}^{i_2}\widetilde Z_{3,+}^{i_3}\rd_{v_+}^{j}\big(\rd_r|_+\big)^{k}\Pb f|^2 \,\vols\,dv_{+} \\
&\lesssim \sum_{0 \leq i_1 + i_2 + i_3 + j + k \leq 1} \int_{\{r=r'\} \cap \{v_+ \leq 1\}} \chi(v_+) |\widetilde Z_{1,+}^{i_1}\widetilde Z_{2,+}^{i_2}\widetilde Z_{3,+}^{i_3}\rd_{v_+}^{j}\big(\rd_r|_+\big)^{k} f|^2 \,\vols\,dv_{+} \;.
\end{split}
\end{equation*}
Combining this with our previous result of \un{Proposition 4.11} for $\Pb \psi$ we obtain \un{Corollary 4.21} for $ \Pb \psi$ instead of $\psi$ and with $q_r = q$, $q_l \geq q$.

We continue to show that \un{Proposition 4.31} holds with $\Pb \psi$ instead of $\psi$ and $q_r = q$, $q_l \geq q$. The proof is carried out in $(t,r,\theta, \varphi)$ coordinates. Note that $\rd_{v_+} = \rd_t$. Using the same multiplier as in \un{(4.34)} with $\Pb \psi$ instead of $\psi$ and the new $q_r, q_l$ we again obtain from \eqref{EqComEq} that we get \un{(4.34)} with 
$$ \Big(-a^2 [\sin^2 \theta, \Pb] \, \rd_t^2 \psi + 2sia [\cos \theta, \Pb] \, \rd_t \psi\Big)\big( - \chi(t) e^{\lambda r} \rd_r \overline{\Pb \psi}\big)$$ 
on the right hand side. This additional term is estimated in exactly the same way as before, using \eqref{EqV4.32}. The modification in the proof of \un{Proposition 4.51} is analogous, using \eqref{EqV4.52}. Thus we obtain \un{Proposition 4.51} with $\psi$ replaced by $\Pb \psi$ and $q_r = q$, $q_l \geq q$. The proof of the extension to the left Cauchy horizon remains unchanged. This shows \eqref{EqThmConD'}.

We now prove \eqref{EqThmConA'}. To begin with, let us recall that the assumptions made in Theorem \ref{MainThm} are strictly stronger than those in \un{Theorem 3.9} if we set $q_r = q_-$, $p_0 = \lceil \frac{q}{2} \rceil$, and $l_0 = 2$. Thus, all the partial results in \cite{Sbie23} still hold. To improve on the blow-up bound \un{(8.5)}, we begin with an updated version of \un{Proposition 7.5} using our stronger assumptions on the event horizon.

\begin{proposition} \label{PropIPW}
The assumptions \eqref{EqThmA}, \eqref{EqThmB}, \eqref{EqThmC}, together with \underline{Assumption 2.46} imply that there exists an $m_0 \in \{\pm 2, \pm 1, 0\}$ such that  $(\widecheck{\psi|_{\Hp_r}})_{m_02} \notin H^{\f{q}{2}}_\omega \big((-\varepsilon, \varepsilon)\big)$ for any $\varepsilon >0$.
\end{proposition}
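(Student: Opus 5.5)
The plan is to reduce the statement to the spherical projection $\psi_{S(m_02)}$ on $\Hp_r$, where the blow-up is assumed directly. From there I localise the resulting non-regularity in frequency to $\omega = 0$ and transfer it to the spheroidal mode using Proposition \ref{PropRelateSNew}.

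\textbf{Step 1: setting up $h$.} Let $h := \psi|_{\Hp_r}$, viewed as a function of $v_+ \in \R$ for the extended solution. By \underline{Assumption 2.46}, $h$ and its derivatives are integrable with any polynomial weight as $v_+ \to -\infty$, so only the region $v_+ \geq v_0$ matters for weighted $L^2$ bounds. Assumption \eqref{EqThmA} says $\sum_{|m|\leq 2}\int v_+^q |h_{S(m2)}|^2\, dv_+ = \infty$. Hence there is some $m_0 \in \{\pm2,\pm1,0\}$ with $\int_\R (1+|v_+|^q)|h_{S(m_02)}|^2 \,dv_+ = \infty$. By Plancherel, together with Proposition \ref{PropFourierSob} when $q$ is odd, this means $\check h_{S(m_02)} \notin H^{\f{q}{2}}(\R)$. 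On the other hand, $h_{S(m_02)} \in L^2$ with weight $v_+^{q_-}$ by \eqref{EqThmB}. Moreover $\rd_{v_+} h_{S(m_02)}$ is in $L^2$ with weight $v_+^q$ by \eqref{EqThmC}, using that projection onto $Y^{[2]}_{m_02}$ is $L^2(\Sp^2)$-bounded and commutes with $\rd_{v_+}$. Since $\widecheck{\rd_{v_+} h} = c\,\omega \check h$, this gives $\omega \check h_{S(m_02)} \in H^{\f{q}{2}}(\R)$, while all derivatives $\rd_\omega^j \check h_{S(m_02)}$ with $j < \f{q}{2}$ are in $L^2(\R)$.

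\textbf{Step 2: localising to $\omega = 0$.}
\begin{itemize}
\item \emph{$q$ odd.} Write $q = 2\mathfrak{p}-1$ and set $g := \rd_\omega^{\mathfrak{p}-1}\check h_{S(m_02)}$. Then $g \in L^2$ and $g \notin H^{\f12}(\R)$. Also $\omega g = \rd_\omega^{\mathfrak{p}-1}(\omega \check h_{S(m_02)}) - (\mathfrak{p}-1)\rd_\omega^{\mathfrak{p}-2}\check h_{S(m_02)} \in H^{\f12}(\R)$, where the second term is even $H^1$ and Lemma \ref{Lem1contained} applies. Proposition \ref{PropLocH} then gives $g \notin H^{\f12}((-\varepsilon,\varepsilon))$, that is, $\check h_{S(m_02)} \notin H^{\f q2}((-\varepsilon,\varepsilon))$.
\item \emph{$q$ even.} Write $q = 2\mathfrak{p}$. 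Here $\rd^{\mathfrak{p}}_\omega \check h_{S(m_02)} \notin L^2(\R)$. However, $\omega \rd_\omega^{\mathfrak{p}} \check h_{S(m_02)} \in L^2(\R)$ by the same commutation identity. So the $L^2$-divergence can only come from a neighbourhood of $\omega = 0$, which is the elementary integer analogue of Proposition \ref{PropLocH}.
\end{itemize}

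\textbf{Step 3: passing to the spheroidal mode.} For $q$ odd, apply Proposition \ref{PropRelateSNew} with $\mathfrak{p} = \f{q+1}{2}$. Its hypothesis \eqref{PropA} requires weight $2(\mathfrak{p}-1) = q-1 \leq q_-$, which holds by \eqref{EqThmB}. Hypothesis \eqref{PropB} requires weight $2\mathfrak{p}-1 = q$ on $\rd_{v_+}h$, which is \eqref{EqThmC}. We conclude that $(\widecheck{\psi|_{\Hp_r}})_{m_02} \notin H^{\f q2}((-\delta,\delta))$ for all $0 < \delta < \delta_0$. For larger $\varepsilon$ the claim follows from Lemma \ref{LemRest}.

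For $q$ even, Proposition \ref{PropRelateSNew} is not literally applicable, because it concerns half-integer spaces. I would rerun its proof in the integer setting $H^{\mathfrak{p}}$, which is strictly easier. The decomposition \eqref{EqHML} still applies. For the $l \neq l_0$ terms, the only delicate contribution is $\sum_{l\neq l_0} E_{m_02l}\,\rd^{\mathfrak p}_\omega \check h_{S(m_0l)}$. It is handled by writing $E_{m_02l}\,\rd^{\mathfrak{p}}_\omega \check h = \frac{E_{m_02l}}{\omega}\cdot \omega\rd_\omega^{\mathfrak{p}}\check h$ and using \eqref{EqBoundOneOverOmega} together with the $L^2$ bound on $\omega\rd_\omega^{\mathfrak{p}}\check h_{S(m_0l)}$ summed over $l$. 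That bound comes from \eqref{EqThmC} and from \eqref{EqThmB} with $q_- > q-1$. The invertibility of $E_{m_022}$ near $0$ then concludes via Corollary \ref{CorChi}.

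\textbf{Main obstacle.} I expect the main difficulty to be the bookkeeping in the even case. One must make sure that \eqref{EqThmB} with $q_- < q$ still gives $\rd^{\mathfrak{p}-1}_\omega\check h \in L^2$ summed over all $(m,l)$, which requires $q_- \geq 2\mathfrak{p}-2 = q-2$; this holds. One must also check that the lack of $L^2$-integrability of $\rd^{\mathfrak{p}}_\omega \check h_{S(m_0l)}$ for the other modes does not contaminate the $l_0$ mode.
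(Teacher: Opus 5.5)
Your proposal is correct, and for odd $q = 2\mathfrak{p}-1$ it follows the paper's route: Proposition \ref{PropFourierSob} gives $\rd_\omega^{\mathfrak{p}-1}\check h_{S(m_02)} \notin H^{\f{1}{2}}(\R)$, Proposition \ref{PropLocH} localises this at $\omega = 0$, and Proposition \ref{PropRelateSNew} passes it to the spheroidal mode, with Lemma \ref{LemRest} for $\varepsilon \geq \delta_0$ making explicit a step the paper leaves implicit. For even $q$ the paper just cites \underline{Proposition 7.5} of \cite{Sbie23}, whereas you re-derive that integer-order case by hand via the same $E_{m_02l}/\omega$ trick, using \eqref{EqThmC} to put $\omega\,\rd_\omega^{\mathfrak{p}}\check h_{S(m_0l)}$ in $L^2$; this is valid but not needed.
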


\begin{proof}
By assumption \eqref{EqThmA} there exists an $m_0 \in \{\pm 2, \pm 1, 0\}$ such that 
\begin{equation}
\label{EqBlo}
\int_\R |v_+|^{q} | (\psi|_{\Hp_r})_{S(m_0 2)} |^2 \, dv_+ = \infty \;.
\end{equation}
Thus, if $q \in \N_{\geq 4}$ is even, the claim follows directly from \un{Proposition 7.5}. Thus it remains to treat the case $q = 2\mathfrak{p}-1$, $\mathfrak{p} \in \N_{\geq 2}$.

It follows from \eqref{EqThmB} and \eqref{EqThmC}, together with \un{Assumption 2.46} which gives exponential decay of $\psi|_{\Hp_r}$ for $v_+ \to - \infty$, in the same way as in \eqref{EqPropA} and \eqref{EqPropB}, that $(\widecheck{\psi|_{\Hp_r}})_{S(m_02)} \in H^{\mathfrak{p}-1}(\R)$ and $ \omega \cdot \rd_\omega^{\mathfrak{q}} (\widecheck{\psi|_{\Hp_r}})_{S(m_02)} \in H^{\f{1}{2}}(\R)$ for all $0 \leq {\mathfrak{q}} \leq \mathfrak{p}-1$. Combining $(\widecheck{\psi|_{\Hp_r}})_{S(m_02)} \in H^{\mathfrak{p}-1}(\R)$  with \eqref{EqBlo}, Proposition \ref{PropFourierSob} implies $\rd_\omega^{\mathfrak{p}-1} (\widecheck{\psi|_{\Hp_r}})_{S(m_02)} \notin H^{\f{1}{2}}(\R)$. But now we are in the setting of Proposition \ref{PropLocH} with $h = \rd_\omega^{\mathfrak{p}-1} (\widecheck{\psi|_{\Hp_r}})_{S(m_02)} $ and hence $(\widecheck{\psi|_{\Hp_r}})_{S(m_02)}  \notin H^{\mathfrak{p} - \f{1}{2}}\big((-\varepsilon, \varepsilon)\big)$ for any $\varepsilon>0$. Applying Proposition \ref{PropRelateSNew} with $h = \psi|_{\Hp_r}$ then gives $(\widecheck{\psi|_{\Hp_r}})_{(m_02)}  \notin H^{\mathfrak{p} - \f{1}{2}}\big((-\varepsilon, \varepsilon)\big)$ for any $\varepsilon>0$.
\end{proof}
In the case of $q$ even with thus obtain directly \un{(8.4)} with $l_0 =2$ and $p_0 = \frac{q}{2}$. In the case of odd $q = 2 \mathfrak{p} -1$ we claim that we can replace \un{(8.4)} by
\begin{equation}
\label{EqImp84}
\widecheck{\psi}_{m_02}(r_-; \omega) \notin H^{{\mathfrak{p}} - \frac{1}{2}}\big((-\varepsilon, \varepsilon) \big) \quad \textnormal{ for any } \varepsilon >0\;.
\end{equation}
To see this, by \un{(8.1)} we have
\begin{equation}\label{EqNT}
\widecheck{\psi}_{m_02}(r_-; \omega) = \mathfrak{R}_{\Hp_l, m_02}(\omega) a_{\Hp_l, m_02}(\omega) + \mathfrak{T}_{\Hp_r, m_02}(\omega) a_{\Hp_r, m_02}(\omega) \;.
\end{equation}
We first consider $m_0 \neq 0$. Recall that by \un{Proposition 6.22} the reflection and transmission coefficients are smooth on $(-|\omega_+|, |\omega_+|)$. Using also $a_{\Hp_l, m_02} \in H^{\mathfrak{p} - \f{1}{2}}\big((-\varepsilon_0, \varepsilon_0)\big)$, which we get from \un{Corollary 7.29} (recall $q_l \geq q = 2 \mathfrak{p}-1$), in combination with Lemma \ref{LemChiK}, gives that the first product on the right hand side of \eqref{EqNT} is in $H^{\mathfrak{p}-\f{1}{2}}\big((-\varepsilon_0, \varepsilon_0)\big)$ for $|\omega_+|>\varepsilon_0>0$. The second product is however \emph{not} in $H^{{\mathfrak{p}} - \frac{1}{2}}\big((-\varepsilon, \varepsilon) \big) $ for any $\varepsilon>0$: by \un{Proposition 7.4} and Proposition \ref{PropIPW} we have $a_{\Hp_r, m_02}(\omega) = (\widecheck{\psi|_{\Hp_r}})_{(m_02)}(\omega) \notin H^{{\mathfrak{p}}-\f{1}{2}}_\omega \big((-\varepsilon, \varepsilon)\big)$ for any $\varepsilon >0$. Moreover \un{Proposition 6.22} gives $\mathfrak{T}_{\Hp_r, m_02}(0) \neq 0$. The statement then follows from Corollary \ref{CorChi}. This shows \eqref{EqImp84} for $m_0 \neq 0$. The case of $m_0 = 0$ follows similarly (cf.\ the structure in \un{(8.3)} without the $\rd_\omega^{p_0}$).

In summary, we have now shown that $\widecheck{\psi}_{m_02}(r_-; \omega) \notin H^{\frac{q}{2}}\big((-\varepsilon, \varepsilon) \big)$ for any  $\varepsilon >0$. We claim that this implies 
\begin{equation}\label{Eqwww}
\widecheck{\psi}_{S(m_02)}(r_-; \omega) \notin H^{\frac{q}{2}}\big((-\varepsilon, \varepsilon) \big) \quad \textnormal{ for any } \varepsilon >0\;.
\end{equation}
Recall that \un{Proposition 4.64} with $q_r = q_-$ and $q_l \geq q$ holds as well as \eqref{EqV4.64}. In the case $q = 2 \mathfrak{p}-1$ we thus note that $\psi(v_+, \theta, \varphi_+; r_-)$ satisfies the assumptions of Proposition \ref{PropRelateSNew}, which implies \eqref{Eqwww}. In the case of $q$ even, \eqref{Eqwww} follows with the same argument as in \un{Proposition 7.5}, below \un{(7.12)}.

We claim that \eqref{Eqwww} gives  
\begin{equation} \label{Eqrrr}
\int_\R  v_+^{q}  |\psi_{S(m_02)}(v_+, r_-)|^2 \, dv_+ = \infty \;.
\end{equation} 
If this were finite, then in combination with \un{Proposition 4.64} with $q_r = q$ and $q_l \geq q$ we would get $\int_\R (1+ |v_+|^{q}) |\psi_{S(m_02)}(v_+, r_-)|^2 \, dv_+ < \infty$. Then Proposition \ref{PropFourierSob} gives a contradiction to \eqref{Eqwww}.
Moreover, since $q_l \geq q$, \un{Proposition 4.64} together with \eqref{Eqrrr} implies that  
\begin{equation*}
\int_1^\infty  v_+^{q}  |\psi_{S(m_02)}(v_+, r_-)|^2 \, dv_+ = \infty\;,
\end{equation*}
which gives \eqref{EqThmConA'}.

Finally, it remains to propagate the bounds \eqref{EqThmConA'} -- \eqref{EqThmConD'} backwards to $\Sigma$. We compute in $(v_+, r, \theta, \varphi_+)$ coordinates for $f \in \{\psi, \mathbb{P}_{S(l=2)} \psi, \Pb \psi, \rd_{v_+} \psi\}$ 
$$|f(v_+, r', \theta, \varphi_+) - f(v_+, r_\Sigma(v_+, \theta, \varphi_+), \theta, \varphi_+)| \leq \int_{r'}^{r_\Sigma(v_+, \theta, \varphi_+)} |\rd_r f(v_+, \theta, \varphi_+)| \, dr$$
and thus
\begin{equation}
\label{EqLogDafermos}
|f(v_+, r', \theta, \varphi_+) - f(v_+, r_\Sigma(v_+, \theta, \varphi_+), \theta, \varphi_+)|^2 \leq \underbrace{\Big(\sup\limits_{(\theta, \varphi_+) \in \Sp^2} |r_\Sigma(v_+, \theta, \varphi_+) - r'| \Big)}_{\leq C v_+^{-\sigma}} \cdot \int_{r'}^{r_\Sigma(v_+, \theta, \varphi_+)} |\rd_r f(v_+, \theta, \varphi_+)|^2 \, dr \;.
\end{equation}
Further integration then yields
\begin{equation} \label{EqFinal}
\begin{split}
\int_{v'}^\infty \int_{\Sp^2} v_+^{q} &|f(v_+, r', \theta, \varphi_+) - f(v_+, r_\Sigma(v_+, \theta, \varphi_+), \theta, \varphi_+)|^2 \vols dv_+ \\
&\leq \int_{v'}^\infty \int_{\Sp^2} \int_{r'}^{r_\Sigma(v_+, \theta, \varphi_+)} v_+^{q-\sigma}|\rd_r f(v_+, \theta, \varphi_+)|^2 \, dr \vols dv_+ \;.
\end{split}
\end{equation}
For $v'$ large enough and $v' \leq v_+$ we have $r_- < r_{\Sigma}(v_+, \theta, \varphi_+) \leq r_{\mathrm{ered}}$, so that the integral on the right hand side is bounded by \un{(4.52)} in the case of $f \in \{\psi, \mathbb{P}_{S(l=2)} \psi, \Pb \psi\}$ and by \eqref{EqV4.52} in the case of $f = \rd_{v_+} \psi$. Now, \eqref{EqThmConA} -- \eqref{EqThmConD} follow from \eqref{EqThmConA'} -- \eqref{EqThmConD'} using \eqref{EqFinal}.
This concludes the proof of Theorem \ref{MainThm}.

\section*{Acknowledgments}

The author acknowledges the support through the Royal Society University Research Fellowship URF\textbackslash R1\textbackslash 211216.

\bibliographystyle{acm}
\bibliography{Bibly}

\end{document}